\definecolor{ForestGreen}{rgb}{0.1333,0.5451,0.1333}
\crefname{equation}{}{}
\newcommand\remove[1]{}
\newtheorem{lemma}{Lemma}[section]
\newtheorem{theorem}{Theorem}
\newtheorem*{lemma*}{Lemma}
\newtheorem{corollary}[lemma]{Corollary}
\newtheorem*{corollary*}{Corollary}
\newtheorem{conj}[lemma]{Conjecture}
\newtheorem*{remark}{Remark}
\theoremstyle{definition}
\newtheorem*{theorem*}{Theorem}
\newtheorem{definition}[lemma]{Definition}
\newtheorem*{rem*}{Remark}
\newtheorem{obs}{Observation}
\newcommand{\eps}{\varepsilon}
\newcommand{\R}{\mathbb{R}}
\newcommand{\E}{\mathop{\mathbb{E}}}
\crefname{algocf}{Algorithm}{Algorithms}
\newcommand{\supp}{\mathsf{supp}}
\newcommand{\Z}{\mathbb{Z}}
\newcommand{\bbC}{\mathbb{C}}
\newcommand{\A}{\Sigma}
\newcommand{\B}{\Gamma}
\newcommand{\C}{\Phi}
\renewcommand{\a}{\sigma}
\renewcommand{\bar}{\overline}
\renewcommand{\hat}{\widehat}
\newcommand{\wt}{\widetilde}
\renewcommand{\bar}{\overline}
\newcommand{\stab}{\mathsf{Stab}}
\renewcommand{\ge}{\geqslant}
\renewcommand{\le}{\leqslant}
\renewcommand{\geq}{\geqslant}
\renewcommand{\leq}{\leqslant}
\newcommand{\opt}{\textsc{Opt}}
\newcommand\Expect[2]{{\mathop{\mathbb{E}}_{#1}\left[ {#2} \right]}}
\newcommand\p{\mbox{\bf P}\xspace}
\newcommand\np{\mbox{\bf NP}\xspace}
\newtheorem{question}{Question}
\begin{document}

\title{On Approximability of Satisfiable $k$-CSPs: VII}

\author{Amey Bhangale\thanks{Department of Computer Science and Engineering, University of California, Riverside. Supported by the Hellman Fellowship award.}
	\and
	Subhash Khot\thanks{Department of Computer Science, Courant Institute of Mathematical Sciences, New York University. Supported by
		the NSF Award CCF-1422159, NSF CCF award 2130816, and the Simons Investigator Award.}
	\and
    Yang P. Liu\thanks{School of Mathematics, Institute for Advanced Study, Princeton, NJ. This material is based upon work supported by the National Science Foundation 
under Grant No. DMS-1926686}
    \and 
	Dor Minzer\thanks{Department of Mathematics, Massachusetts Institute of Technology. Supported by NSF CCF award 2227876 and NSF CAREER award 2239160.}}
\date{\vspace{-5ex}}
\clearpage\maketitle

\begin{abstract}
Let $\Sigma_1,\ldots,\Sigma_k$ be finite alphabets, and let $\mu$ be a distribution over $\A_1 \times \dots \times \A_k$ in which the probability of each atom is at least $\alpha$. We prove that if $\mu$ does not admit Abelian embeddings, and $f_i: \A_i \to \bbC$ 
are $1$-bounded functions (for $i=1,\ldots,k$) such that 
\[ \left|\E_{(x_1,\dots,x_k) \sim \mu^{\otimes n}}\Big[f_1(x_1) \dots f_k(x_k)\Big]\right| 
\geq \eps, \]
then there exists $L\colon \A_1^n\to\mathbb{C}$ of degree at most $d$ and $\|L\|_2\leq 1$ such that 
$|\langle f_1, L\rangle|\geq \delta$, where $d$ and $\delta>0$ depend only on 
$k, \alpha$ and $\eps$. This answers the analytic question posed by Bhangale, Khot, and Minzer (STOC 2022). We also prove several extensions of this
result that are useful in subsequent applications.
\end{abstract}

\pagenumbering{gobble}

\newpage

\setcounter{tocdepth}{2}

\normalsize
\pagebreak
\pagenumbering{arabic}

\section{Introduction}\label{sec:intro}
This paper continues the investigation of the approximability of constraints satisfaction problems~\cite{BKM1, BKM2, BKM3, BKM4, BKM5, csp6}. 
While previous papers dealt with $3$-ary predicates, the focus of the current paper is 
on $k$-ary predicates for $k>3$. Our primary contribution is a set of new analytical 
inequalities for a general family of $k$-ary distributions, extending results of Mossel~\cite{Mossel10} about the class of connected distributions.

\subsection{Constraint Satisfaction Problems}
\subsubsection*{The Decision Version}
Constraint satisfaction problems (CSPs in short) are some of the most fundamental problems in computer science. For a finite alphabet $\Sigma$ and a predicate $P : \Sigma^k \rightarrow \{0,1\}$, an instance $\Psi$ of the problem $P$-CSP consists of a set of variables $x_1, x_2, \ldots, x_n$ and a collection of constraints $C_1, C_2, \ldots, C_m$, where each constraint is of the form $P(x_{i_1}, x_{i_2}, \ldots, x_{i_k}) = 1$. The goal is to 
decide, given an instance $\Psi$, whether there is an assignment $A\colon \{x_1,\ldots,x_n\}\to\Sigma$ satisfying all of the constraints or not. For a collection of predicates $\mathcal{P}\subseteq \{P\colon \Sigma^k\to \{0,1\}\}$, the problem $\mathcal{P}$-CSP is defined analogously, where now
each constraint $C_i$ takes the form $P(x_{i_1}, x_{i_2}, \ldots, x_{i_k}) = 1$ for some $P\in\mathcal{P}$.\footnote{Some well known problems, such as $3$-SAT for example, involve taking negations of variables. To put such problems in the framework we discuss, one takes a collection $\mathcal{P}$ of $8$ predicates, corresponding to the predicate $P(x,y,z) = x\lor y\lor z$ applied on all negation patterns on $3$ variables.}

A systematic study of the complexity of solving CSPs began by Schaefer~\cite{Schaefer78}, who demonstrated that for any $\mathcal{P}$ with $|\Sigma|=2$, the problem  $\mathcal{P}$-CSP is either in $\p$ or else is $\np$-complete. The well-known Dichotomy Conjecture by Feder and Vardi~\cite{FederV98}, suggested that this assertion holds for all finite alphabets $\Sigma$. 
This conjecture was recently confirmed by Bulatov and independently by Zhuk~\cite{Bulatov17,Zhuk20},  following a long line of work using the methods of abstract algebra.

\subsubsection*{The Optimization Version}
In the maximization version of $\mathcal{P}$-CSP, called Max-$\mathcal{P}$-CSP, the task is to find an assignment to the variables that satisfies the maximum fraction of the constraints. An $\alpha$-approximation algorithm is a polynomial-time algorithm which always returns an assignment satisfying at least $\alpha\cdot \opt$ fraction of the constraints, where $\opt$ is the value of the optimum assignment. The focus of the current work is on approximability of fully satisfiable instances, meaning the case that the algorithm is guaranteed that $\opt = 1$.

The PCP Theorem \cite{FGLSS96, AroraLMSS1998, AroraS1998} implies that it is \np-hard to approximate many CSPs within some constant factor $\alpha<1$. H{\aa}stad~\cite{Hastad2001} showed that for specific CSPs, one can obtain hardness results for much better approximation ratios. For instance, he showed that for any $\eps>0$, 3-SAT cannot be approximated within factor $\frac{7}{8}+\eps$ in polynomial time, unless $\p=\np$.\footnote{Note that a random assignment satisfies $\frac{7}{8}$-fraction of the clauses in expectation, so H{\aa}stad's result in fact asserts that the best approximation algorithm for $3$-SAT is (essentially) obtained by randomly sampling an assignment.} His result~\cite{Hastad2001} is in fact stronger: even if the formula is promised to be {\em fully satisfiable}, it is still \np-hard to find an assignment satisfying more than $(\frac{7}{8}+\eps)$-fraction of the clauses. In another notable result, H{\aa}stad showed given 3-LIN instance\footnote{This CSP is over $\mathbb{F}_2$, with constraints of the type $x_{i_1} \oplus x_{i_2} \oplus x_{i_3} = 1$ or $0$.} promised to be $(1-\eps)$-satisfiable, it is \np-hard to find an assignment satisfying more than $(\frac{1}{2}+\eps)$-fraction of the constraints. Unlike 3-SAT, under the stronger promise that the instance is fully satisfiable, one can efficiently find an assignment that satisfies all constraints. Thus, knowing that a CSP instance is fully satisfiable sometimes allows for better approximation algorithms.

By now, the theory of approximability of \emph{almost satisfiable} CSPs is pretty well understood. 
Indeed, assuming the Unique-Games Conjecture~\cite{Khot02}, we know that the best possible polynomial time algorithms for any CSP 
(in the almost satisfiable regime) is given by a semi-definite programming algorithm (SDP in short). The first such result is from~\cite{KKMO07,MOO05}, where it is proved that assuming the Unique-Games Conjecture, the Goemans and Williamson~\cite{GW95} approximation algorithm for the Max-Cut problem is essentially the best possible. Raghavendra~\cite{Rag08} generalized this result to {\em all} CSPs. To do so, he formulated a
natural SDP relaxation of any given CSP instance, and showed that a generic randomized rounding procedure achieves the best possible approximation ratio. Towards this end, he showed that any $(c, s)$ integrality gap instance for this relaxation can be transformed to a hardness result: assuming the Unique-Games Conjecture, for all $\eps>0$, given an instance promised to be $(c-\eps)$-satisfiable, it \np-hard to find an assignment that satisfies at least $(s+\eps)$-fraction of the constraints. 

In comparison, the theory of approximability of satisfiable CSPs is 
much less developed. Besides results
in a few special cases, such as the result of  H{\aa}stad~\cite{Hastad2001} discussed above, not much is known, even assuming conjectures in the spirit
of the Unique-Games Conjecture. One of the primary goals of the recent line of research~\cite{BKM1, BKM2, BKM3, BKM4, BKM5, csp6} is to further develop this
theory for general classes of predicates. 
Ultimately, the
hope is to establish analogous results to Raghavenra's Theorem~\cite{Rag08} (but for the case of satisfiable instances) under conjectures such as 
the Rich $2$-to-$1$ Games Conjecture~\cite{BravermanKM21} (see also~\cite{BravermanKLM21}). The current 
paper is a continuation of this line of research.



\subsection{Towards Satisfiable CSPs}

The most important building-block in Raghavendra's result (as well as in many other PCPs) is the construction of a suitable dictatorship test. A function $f: \Sigma^n \rightarrow \Sigma$ is called a dictatorship function if $f(x)$ depends only on one of $x$'s coordinates. A dictatorship test is a randomized procedure which queries $f$ at a few (correlated) locations randomly and, based on these, decides if $f$ is a dictatorship function or far from any dictator function.

There are three important properties of the test which are useful in getting hardness of approximation results for CSPs. The first one is the {\em completeness parameter} $c$, which is the probability that the test accepts a dictatorship function. The second property is the {\em soundness parameter} $s$, which is the maximum probability the test accepts a far-from-dictatorship function. The third property is the decision predicate that the test uses. If the test uses predicates from the class $\mathcal{P}$,  has completeness $c$ and soundness $s$, then it can be used to prove a UG-hardness result for Max-$\mathcal{P}$-CSP with completeness $(c-\eps)$ and soundness $(s+\eps)$, for any constant $\eps>0$. 

Raghavendra proved his result by showing how to transform a $(c, s)$ integrality gap instance for Max-$\mathcal{P}$-CSP to a dictatorship test with completeness $(c-\eps)$, soundness $(s+\eps)$ and predicates $\mathcal{P}$, and then combined it with a reduction from Unique-Games. His technique inherently cannot be used to prove hardness results 
for satisfiable instances. For once, using the Unique-Games Conjecture, one inherently loses perfect completeness, as Unique-Games are easy on satisfiable instances. However, this may not be a very fundamental loss, as there is a feasible variant of the Unique-Games Conjecture, known as the Rich $2$-to-$1$ Games Conjecture~\cite{BravermanKM21}, that has perfect completeness. It is plausible that assuming the Rich $2$-to-$1$ Games Conjecture, one 
could transform dictatorship tests into hardness results without any loss in the completeness. Therefore, it seems that the more fundamental place 
where perfect completeness is lost is in the design of the dictatorship test.

The paper~\cite{BKM1} initiated a systematic study of completely characterizing the precise approximability of every $k$-ary CSP on satisfiable instances. 
They considered natural dictatorship tests for a large class of predicates similar to the ones given by Raghavendra, except that they do not lose the perfect completeness which we now describe. The starting point is an instance $\Psi$ of $P$-CSP, whose SDP value is $1$, and the integral value (i.e., maximum fraction of the constraints that can be satisfied by an assignment) is $s$. The SDP solution consists of vectors as well as local distribution for each constraint. Since the SDP value is $1$, all these local distributions are supported on the satisfying assignments to $P$. Let $\mu_i$ be the local distribution corresponding to the $i^{th}$ constraint of the instance. The test is as follows (here $\eps>0$ is a small constant independent of $n$):\\

\fbox{
	
	\parbox{430pt}{
		Given $f:\Sigma^n \rightarrow \Sigma$,
		\begin{enumerate}
			\item Select a constraint $C_i$ from $\Psi$ with probability proportional to its weight.
			\item Construct a $k\times n$ matrix by sampling each column of the matrix independently according to $\mu_i$.
            \item Let $x_j$ be the $j^{th}$ row of the matrix.
			\item Check if $P(f(x_1), f(x_2), \ldots, f(x_k)) = 1$.
		\end{enumerate}
		
	}
}
\\

If $f$ is a dictatorship function, then the test accepts with probability $1$. This follows because for every $i$, the distribution $\mu_i$ is supported on the satisfying assignments to $P$ and therefore every column of the matrix is from $P^{-1}(1)$. What about the acceptance probability of the test when $f$ is far from dictator functions? 
When a small loss in the completeness parameter is allowed, one can modify the distributions $\mu_i$ slightly and ensure that they are fully supported on $\Sigma^k$, in which case one can appeal to results of Mossel~\cite{Mossel10}. When no loss in completeness is allowed, one has to work with the distribution $\mu_i$ as is, and this task becomes significantly harder.

The paper~\cite{BKM1} relates the soundness analysis of the test to the following $k$-wise correlation of functions $f_i : \A_i^n \rightarrow [-1,1]$ with respect to the distribution $\mu$ supported on $P^{-1}(1)$. Specifically, the following general question was posed in~\cite{BKM1}.

\begin{question}\label{question:1} {\rm (Informal)}  Find the necessary and sufficient condition on a distribution $\mu$ over $\A_1 \times \dots \times \A_k$, such that
	\begin{equation}
		\label{eq:exp_prod_hd}
		\left| \Expect{(x_1, x_2, \ldots, x_k)\sim \mu^{\otimes n}}{f_1(x_1)f_2(x_2)\cdots f_k(x_k)} \right| \to 0  \ \ \ \ \ \mbox{as} \ \ \ \ d \to \infty,
	\end{equation}
	for all functions $f_i : \A_i^n \rightarrow [-1,1]$ such that at least one function (essentially) has degree at least $d$.
\end{question}
Mossel~\cite{Mossel10} showed that if the distribution $\mu$ is {\em connected}, then~\eqref{eq:exp_prod_hd} as
above holds. Here, a distribution $\mu$ is called connectedness if looking at the graph whose vertices consist of $\supp(\mu)$, and two vertices are adjacent if they differ on exactly one coordinate, is connected. The connectedness condition, however, is not necessary, as noted implicitly in~\cite{BhangaleK21}. Let $G$ be a non-Abelian group
with no dimension one representation, and take $P: G^3 \to \{0,1\}$ as
$P^{-1}(1) = \{ (x,y,z)\ |\ x\cdot y \cdot z = 1_G\}$, along with the distribution $\mu$ that is uniform on $P^{-1}(1)$. The distribution $\mu$ is not connected, but yet~\cite{BhangaleK21} prove that~\eqref{eq:exp_prod_hd} still holds.

\subsection{Abelian Embeddings}
Motivated by~\cref{question:1}, the work~\cite{BKM1} suggested the notion of 
\emph{Abelian embeddings} of distributions, defined as follows.

\begin{definition}\label{def:embed}
 We say that a distribution $\mu$ on $\A_1 \times \dots \times \A_k$
  admits an Abelian embedding if there exist an Abelian group $G$ and mappings $\sigma_i : \A_i \rightarrow G$, $1 \leq i \leq k$ not all constant such that for all $(a_1, a_2, \ldots, a_k)\in \mathsf{supp}(\mu)$, it holds that $\sum_{i=1}^k \sigma_i(a_i) = 0_G$.
\end{definition}
First, we note that for~\eqref{eq:exp_prod_hd} to hold, it is necessary for $\mu$ to not admit any Abelian embedding. Indeed, suppose that $\mu$ admits an Abelian embedding, and fix $G$ and $\sigma_1,\ldots,\sigma_k$ as in~\cref{def:embed}. Fix some non-trivial character $\chi\in \hat{G}$ and define $f_i(x_i) = \prod_{j=1}^n\chi(\sigma_i((x_i)_j))$. 
Note that for each $(x_1, x_2, \ldots, x_k)\in \supp(\mu^{\otimes n})$, the product $f_1(x_1) f_2(x_2)\cdots f_k(x_k)$ is equal to
	\begin{align*}
		\prod_{i=1}^k \prod_{j=1}^n\chi(\sigma_i((x_i)_j)) 
		= \prod_{j=1}^n \prod_{i=1}^k\chi(\sigma_i((x_i)_j)) 
		=  \prod_{j=1}^n\chi\left(\sum_{i=1}^k\sigma_i((x_i)_j)\right) 
        = \prod_{j=1}^n \chi(0_G) 
        =  1.
	\end{align*}
In particular, the expectation of $f_1(x_1) f_2(x_2)\cdots f_k(x_k)$ over $\mu^{\otimes n}$ is 
bounded away from $0$ in absolute value. Yet, for large $n$, for each $i$ such that $\sigma_i$ is not constant the corresponding $f_i$ is a (essentially)
 high-degree function.\footnote{The functions here are complex-valued with absolute value $1$; one can appropriately take their real or imaginary parts if one insists on
 having real-valued functions.}
Motivated by these examples and a long-term application to approximability of CSPs on satisfiable instances, the authors of \cite{BKM1} hypothesized
that having no Abelian embeddings is also a sufficient condition. More precisely:
\begin{conj}\label{conj:k-ary-hyp} {\rm (Informal):}
For a distribution $\mu$ on $\Sigma^k$,  Conclusion \eqref{eq:exp_prod_hd} holds if and only if $\mu$ admits no Abelian embedding.
\end{conj}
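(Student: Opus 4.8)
The ``only if'' direction is already verified above, so the task is the ``if'' direction in the quantitative form of the abstract. I would argue the contrapositive: for suitable $d=d(k,\alpha,\eps)$ and $\delta=\delta(k,\alpha,\eps)>0$, if the degree-$\le d$ part of $f_1$ has $\norm{\cdot}_2<\delta$, then $\big|\E_{\mu^{\otimes n}}[f_1(x_1)\cdots f_k(x_k)]\big|<\eps$. Writing $f_1=L+f_1^{>d}$ with $L$ the low-degree part, the $L$-term contributes at most $\norm{L}_2<\delta$ by Cauchy--Schwarz and $1$-boundedness of the remaining $f_i$, so the content is to forbid \emph{resonance}: a function on $\A_1^n$ with no appreciable low-degree mass must correlate negligibly with an arbitrary product $\prod_{i\ge 2}f_i(x_i)$ under $\mu^{\otimes n}$. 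The prototypical obstruction is the character $\chi\circ\sigma_1$ of an Abelian embedding together with the matching characters in the other coordinates, and the no-embedding hypothesis says exactly that no such resonant configuration exists.

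The proof would go by induction (with $k$, and the total alphabet size, as the decreasing parameters; the base cases being Mossel's theorem for connected distributions~\cite{Mossel10} and, where needed, the ternary analysis of~\cite{BKM1} and its sequels), driven by a structural analysis of $\supp(\mu)$. Decompose $\supp(\mu)$ into the connected components of its Hamming graph. This exhibits $\mu$ as a mixture of \emph{connected} sub-distributions glued along shared alphabet symbols, and it refines each coordinate by a ``component index.'' Mossel's theorem already controls the correlation within a component, so resonance can only enter through the combinatorics of the gluing; and here the no-embedding hypothesis does its work, since (by the standard edge argument) an Abelian embedding of $\mu$ must be constant on the symbols of each component but may vary between components --- so ``$\mu$ has no Abelian embedding'' is precisely the statement that the component/gluing structure admits no nontrivial Abelian labeling. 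What remains is non-Abelian gluing data, exactly the $xyz=1_G$ phenomenon of~\cite{BhangaleK21}, which is tamed by a representation-theoretic mixing bound: the correlation is controlled by the norms of the nontrivial isotypic components of the functions involved, and the degree hypothesis forces these to be small.

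Two devices are needed to run this. First, since resonant functions like $\chi\circ\sigma_1$ have constant influence in every coordinate, the invariance principle cannot be applied to $f_1$ as is; one uses a relative/conditional version --- freeze the component-index information (and, more generally, quotient out the Abelian directions extracted from the connected parts) and apply invariance fiberwise. In the resulting Gaussian model ``no Abelian embedding'' becomes ``no common nontrivial linear functional across the coordinates,'' and the correlation of genuinely high-degree multilinear polynomials decays. Second, one must carry $(d,\delta)$ through the induction. Conditioning a coordinate to a fixed symbol changes the distribution, and a conditioned distribution can acquire an Abelian embedding that $\mu$ lacks; so the induction is organized around a dichotomy --- either most conditional distributions remain embedding-free, and the inductive hypothesis applies after a routine influence-based restriction, or they do not, and the ``fiberwise embeddings that fail to glue globally'' mechanism of the previous paragraph is triggered and handled by non-Abelian mixing.

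The main obstacle, I expect, is exactly this synthesis: arranging the case analysis so that each conditioning either strictly simplifies the instance or deposits a \emph{bounded} non-Abelian gadget, and proving the attendant non-Abelian mixing bound with $d$ and $\delta$ depending on $k,\alpha,\eps$ only. This is presumably where a ``master'' coordinate / master embedding is introduced, to coordinatize the Abelian part coming from the connected components, and where most of the work of lifting Mossel's connected-distribution theorem and the ternary results of~\cite{BKM1, csp6} to general $k$ lives; the ``extensions'' promised in the abstract are likely the strengthened inductive statements that make the bootstrap close.
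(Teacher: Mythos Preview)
Your proposal has a genuine gap: the ``connected-component decomposition plus non-Abelian mixing'' strategy does not have a clear path to completion, and the paper's proof works quite differently.

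The paper never decomposes $\supp(\mu)$ into Hamming-graph components. Instead, the induction on $k$ runs as follows. Cauchy--Schwarz eliminates $x_k$, producing a $(k-1)$-ary correlation over the doubled distribution $\mu_{-k,-k}$ on pairs $((x_1,\dots,x_{k-1}),(x_1',\dots,x_{k-1}'))$. This distribution contains the diagonal copy of $\mu_{-k}$ as a submeasure, so after a random restriction one applies the inductive hypothesis to $\mu_{-k}$ (which has no Abelian embedding by the obvious lifting). The output is noise stability of $(f_1)_{I\to z_1}\overline{(f_1)_{I\to z_1'}}$, which is then \emph{reinterpreted as a $3$-wise correlation} over an auxiliary pairwise-connected distribution on $\Sigma_1\times\Sigma_1\times(\Sigma_1\times\Sigma_1\cup\{\star\})$; the ternary local inverse theorem of~\cite{csp6} is applied here, at every step of the induction, not merely as a base case. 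This yields that restrictions of $f_1$ correlate with product functions. Iterating, one replaces $f_2,\dots,f_k$ by product functions, and the endgame is a lattice argument (the no-Abelian-embedding condition is equivalent to a certain integer matrix having integer inverse) showing that product functions with large $k$-wise correlation under $\mu$ must be nearly constant coordinatewise, whence $f_1$ is noise stable.

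Your approach stalls at the point you yourself flag: the ``representation-theoretic mixing bound'' that would tame the gluing between connected components is available in the $xyz=1_G$ example precisely because of the group structure and its unitary representations, and there is no analogue for a general distribution with no Abelian embedding. Likewise, the ``conditional invariance principle'' you invoke would have to be invented, and it is unclear what the Gaussian model even is once you have quotiented by an ill-defined ``Abelian part.'' The paper sidesteps all of this: the non-Abelian content is absorbed entirely into the black-box ternary theorem and the elementary lattice lemma about product functions, with no invariance principle and no component analysis.
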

~\cref{conj:k-ary-hyp} was proved for $k=3$ in~\cite{BKM2}. Follow-up works~\cite{BKM3, BKM4, BKM5, csp6} further studied the case of $k=3$ and distributions $\mu$ that do admit Abelian embeddings, and gave a characterization of functions $f_1,f_2,f_3$ that may exhibit such a correlation 
(provided that the distribution $\mu$ is pairwise-connected, which is a much milder requirement).

 \subsection{Our Contributions}
 \subsubsection{Main Result}
Our first result is a confirmation of~\cref{conj:k-ary-hyp} for all $k$.
 \begin{theorem} [Main theorem]
\label{thm:kcsp}
Let $k$ be a positive integer and let $\mu$ be a distribution over $\A_1 \times \dots \times \A_k$ that does not admit any Abelian embedding, and in which the probability of each atom is at least $\alpha$. Then, for every $\eps > 0$ there is $\delta := \delta(\alpha,\eps)>0$ such that for sufficiently large $n$, if $1$-bounded functions $f_i: \A_i^n \to \bbC$ satisfy
\[ \left|\E_{(x_1,\dots,x_k) \sim \mu^{\otimes n}}\Big[\prod_{i=1}^k f_i(x_i)\Big]\right| \ge \eps, \]
then $\stab_{1-\delta}(f_i) \ge \delta$ for all $i = 1, 2, \dots, k$. Quantitatively, we can take 
\[
\delta := \exp(-\exp(\dots\exp(\eps^{-O_{\alpha}(1)})))
\]
where the number of exponentials is at most $k^{O(k)}$.
\end{theorem}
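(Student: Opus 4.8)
The plan is to prove the statement by induction on the "complexity'' of $\mu$ -- its arity $k$ together with the sizes of the alphabets $\A_i$, which under the hypothesis are automatically at most $1/\alpha$ -- with connected distributions serving as the terminal/base case via Mossel's theorem \cite{Mossel10} (and the $3$-ary case of \cite{BKM2}); note that every non-connected distribution with no Abelian embedding has arity at least $3$, since a disconnected $2$-ary distribution always admits an Abelian embedding. By symmetry it suffices to establish $\stab_{1-\delta}(f_k)\ge\delta$, the analogous bound for each other $f_i$ following by running the same argument with that coordinate playing the role of the last one. Since "$\stab_{1-\delta}(f_k)\ge\delta$'' is equivalent, up to a change of parameters, to "$f_k$ carries a non-negligible part of bounded degree'', we argue the contrapositive: assuming $\stab_{1-\delta}(f_k)<\delta$, we show $\bigl|\E_{\mu^{\otimes n}}[\prod_i f_i(x_i)]\bigr|<\eps$.

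The engine is the elimination of the last coordinate. Writing $\bar\mu$ for the marginal of $\mu$ on the first $k-1$ coordinates, $g:=f_1\otimes\cdots\otimes f_{k-1}$, and $M$ for the conditional-expectation operator from $L^2(\bar\mu^{\otimes n})$ onto $L^2(\mu_k^{\otimes n})$, we have the identity
\[
\E_{\mu^{\otimes n}}\Bigl[\,\prod_{i=1}^k f_i(x_i)\Bigr] \;=\; \bigl\langle f_k,\, M g\bigr\rangle \;=\; \bigl\langle M^{*} f_k,\, g\bigr\rangle,
\]
and $\|g\|_2\le 1$. The dichotomy is according to the bipartite graph on $\A_k$ together with the support of $\bar\mu$, whose edges are the atoms of $\mu$. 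In the favorable case this graph is connected; then the single-coordinate operator $M_1^{*}$ has all of its non-trivial singular values bounded by some $\rho=\rho(\alpha)<1$, so $M^{*}=(M_1^{*})^{\otimes n}$ shrinks every Fourier coefficient of $f_k$ geometrically in its degree, and the assumed pseudorandomness of $f_k$ forces $\|M^{*}f_k\|_2<\eps$. By Cauchy--Schwarz the correlation is then below $\eps$, and no recursion is needed; this step reproves and slightly extends Mossel's theorem, and in particular handles every connected $\mu$ (for which the bipartite graph is connected for all coordinates).

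The real work is the unfavorable case, where this bipartite graph is disconnected; the prototype is the "partial function'' support of a predicate such as $z=x\wedge y$. Its connected components induce partitions $\A_k=\bigsqcup_c\A_k^{(c)}$ and a partition of the support of $\bar\mu$ into blocks $T_c$, so the correlation splits as a two-level expectation: an i.i.d.\ choice, for each of the $n$ positions, of a component, followed by a sample of that position from the corresponding connected sub-distribution $\mu_c$ -- each of which, being connected, is a base case. The obstruction, which I expect to be the main difficulty, is that although the blocks $T_c$ partition the tuples $(b_1,\dots,b_{k-1})$, an individual letter $b_i\in\A_i$ can occur in blocks belonging to several components, so the factors $f_i$ do not respect the partition and cannot be decomposed along it; the "component-mixing'' this permits is exactly the freedom a large correlation would have to exploit. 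The plan is to show that the absence of Abelian embeddings makes any large component-mixing contribution impossible: after isolating such a contribution and extracting the relevant characters of the group that would govern it, one recovers an Abelian embedding of $\mu$, a contradiction. Carrying this out amounts to peeling off the component structure one layer at a time, handling each connected piece by the base case, and bounding the recombination across positions by a combinatorial argument driven by the no-embedding hypothesis (with merging of coordinates used to reduce the arity when available); every layer costs a fresh application of the invariance/hypercontractivity machinery, and since the branching recursion on the finite, $\alpha$- and $k$-bounded structure of $\mu$ has total depth at most $k^{O(k)}$, composing the resulting quantitative bounds yields the claimed tower of exponentials.
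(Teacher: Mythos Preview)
Your favorable case is correct and standard---when the bipartite graph between $\A_k$ and $\supp(\mu_{-k})$ is connected, the conditional-expectation operator $M^*$ contracts the high-degree part of $f_k$, and this is essentially Mossel's argument. But the entire new content of the theorem lies in the disconnected case, and there your proposal is not a proof. Conditioning on a component profile $\vec c$ restricts $f_1,\dots,f_{k-1}$ to overlapping sub-alphabets rather than decomposing them; the pieces $\mu_c$ are connected only in your bipartite sense, not in Mossel's sense, so they are not base cases; and low global stability of $f_k$ does not transfer to the restrictions $f_k|_{\prod_j \A_k^{(c_j)}}$, so the contrapositive does not propagate through the conditioning. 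The sentences ``extracting the relevant characters\dots recovers an Abelian embedding'' and ``bounding the recombination by a combinatorial argument driven by the no-embedding hypothesis'' do not describe concrete steps, and you have not identified where the full no-Abelian-embedding hypothesis on $\mu$ (as opposed to on $\mu_{-k}$) actually enters.

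The paper's route is different and does not use the connected/disconnected dichotomy at all. It inducts on $k$ alone. From $k-1$ to $k$: (i) Cauchy--Schwarz in $x_k$ plus a random restriction reduces to a $(k-1)$-ary correlation over $\mu_{-k}$, where the inductive hypothesis applies since $\mu_{-k}$ also has no Abelian embedding; (ii) the resulting noise-stability of $(f_1)_{I\to z}\overline{(f_1)_{I\to z'}}$ is recast as a $3$-ary correlation over an auxiliary pairwise-connected distribution on $\A_1\times\A_1\times\A_1^+$, and the $3$-CSP local inverse theorem (Theorem~\ref{thm:3csp}) yields that random restrictions of $f_1$ correlate with \emph{product functions}; (iii) iterating this replaces all but one $f_i$ by product functions, and then a direct one-dimensional computation (Lemma~\ref{lemma:prod1}) shows that under a no-Abelian-embedding $\mu$, correlated product functions are each nearly constant, which gives noise stability of $f_1$. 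Step~(iii) is precisely where the full hypothesis on $\mu$ is used, via a linear-algebraic reformulation of ``no Abelian embedding''; nothing analogous to (ii) or (iii) appears in your plan. The tower of exponentials comes from the $k-1$ nested invocations of Theorem~\ref{thm:3csp} inside each inductive level.
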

\begin{remark}
The condition  that $\stab_{1-
\delta}(f_i)\leq \delta$ (see Definition~\ref{def:stab}) serves as a convenient proxy for the condition that
the function $f_i$ is essentially of high degree.  Indeed,
if $\stab_{1-
\delta}(f_i)\leq \delta$ then the Fourier mass of $f_i$ on degrees less than $\frac{1}{\delta}$ is at most $O(\delta)$. Conversely, if the Fourier mass
on degrees less than $\frac{1}{\delta} \log(\frac{2}{\delta})$ is at most $\frac{\delta}{2}$, then $\stab_{1-
\delta}(f_i)\leq \delta$.
\end{remark}

\paragraph{Implications to CSP approximability:}  combining~\cref{thm:kcsp} with the transformation from \cite{BKM1} converting integrality gap instances to dictatorship tests, we have the following result:
\begin{theorem}
		\label{thm:dict_test}
		Let $P\colon \Sigma^k\to\{0,1\}$ be a predicate that satisfies the following conditions: (1) $P$ admits no Abelian embedding,  (2a) there exists an instance of Max-$P$-CSP that has a $(1,s)$-integrality gap for the basic SDP relaxation, (2b) on every constraint, the local distribution
in the SDP solution
admits no Abelian embedding. Then, for every $\eps>0$, there is a dictatorship test for $P$-CSP that has perfect completeness and soundness $s+\eps$.
\end{theorem}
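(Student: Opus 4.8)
The plan is to feed the integrality gap instance $\Psi$ supplied by hypothesis~(2a) into exactly the dictatorship test sketched in the introduction: sample a constraint $C_i$ of $\Psi$ with probability proportional to its weight, build a $k\times n$ matrix whose columns are i.i.d.\ samples from the local SDP distribution $\mu_i$ on $C_i$, let $x_1,\dots,x_k$ be the rows, and accept iff $P(f(x_1),\dots,f(x_k))=1$. Completeness comes for free: the SDP value of $\Psi$ is $1$, so every $\mu_i$ is supported on $P^{-1}(1)$, and hence for a dictator $f(x)=x_\ell$ the tuple $(f(x_1),\dots,f(x_k))$ is simply the $\ell$-th column of the matrix, which lies in $P^{-1}(1)$. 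Thus I only need to establish soundness: if $f\colon\A^n\to\A$ passes with probability at least $s+\eps$, then $f$ has some coordinate $\ell$ with $\mathrm{Inf}_\ell^{\le d}(f)\ge\tau$ for an absolute $d=d(\eps)$ and $\tau=\tau(\eps)>0$, which is the standard ``not far from a dictator'' condition making the test usable against Rich $2$-to-$1$ Games.

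For soundness I would expand $\one[P(y_1,\dots,y_k)=1]$ in an orthonormal character basis of $\mathbb{C}^{\A}$ and write the acceptance probability as a weighted sum, over constraints $i$ and character patterns $(a_1,\dots,a_k)$, of correlations $\E_{\mu_i^{\otimes n}}\big[\prod_{j=1}^k(\chi_{a_j}\circ f)(x_j)\big]$; crucially each $\chi_{a_j}\circ f$ is $1$-bounded, which is precisely the setting of \cref{thm:kcsp}. Splitting each factor into its degree-$\le D$ part and its higher-degree part produces three groups of terms: (i) terms in which every factor is a low-degree part \emph{and} all the functions $\chi_{a_j}\circ f$ have every degree-$\le D$ influence below $\tau$; (ii) terms carrying at least one high-degree factor; (iii) low-degree terms in which some $\chi_{a_j}\circ f$ has a degree-$\le D$ influence at least $\tau$. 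Group~(i) is bounded exactly as in Raghavendra's theorem via the invariance-principle machinery already set up in \cite{BKM1}: it is within $\eps/3$ of the value of a Gaussian rounding of the SDP vectors of $\Psi$, and since any rounding outputs an assignment of value at most $s$, group~(i) contributes at most $s+\eps/3$. For group~(ii), if its total were at least $\eps/3$ then by averaging a single correlation $\big|\E_{\mu_i^{\otimes n}}[\prod_j g_j(x_j)]\big|\ge\eps'$ would survive with some $g_j$ purely of degree $>D$ and the corresponding $\chi_{a_j}$ non-trivial; but the distribution governing that correlation is a sub-product of $\mu_i^{\otimes n}$, which by hypotheses~(1) and~(2b) admits no Abelian embedding, so \cref{thm:kcsp} applies and forces $\stab_{1-\delta}(g_j)\ge\delta$ with $\delta=\delta(\alpha,\eps')$; taking $D>1/\delta$, so that the remark following \cref{thm:kcsp} localizes this noise stability onto degrees below $D$, contradicts $g_j$ living on degrees above $D$. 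Hence group~(ii) contributes less than $\eps/3$.

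Combining the two estimates, group~(iii) must contribute at least $\eps/3$, so some $\chi_{a_j}\circ f$ has a coordinate $\ell$ with degree-$\le D$ influence at least $\tau$; since $\chi_{a_j}$ is a fixed character of the fixed-size group attached to $\A$, this transfers, up to a loss of a factor $|\A|$, to $f$ itself, giving $\mathrm{Inf}_\ell^{\le D}(f)\ge\tau'(\eps)>0$ --- exactly the required soundness conclusion, with decision predicate $P$, completeness $1$, and soundness $s+\eps$. I expect the only real obstacle to be group~(ii): one has to verify that every correlation appearing there is genuinely governed by a no-Abelian-embedding distribution (so that \cref{thm:kcsp}, rather than Mossel's connectedness theorem \cite{Mossel10}, which does not apply since these $\mu_i$ need not be connected, can be invoked), and then one must fix the degree cutoff $D$, the noise level $\delta$ coming out of \cref{thm:kcsp}, and the influence threshold $\tau$ in a compatible order so that all three error budgets close simultaneously. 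A minor, standard point is the folding of $f$ to annihilate its degree-$0$ Fourier coefficient, which ensures that a non-negligible low-degree Fourier mass actually yields an influential coordinate rather than being absorbed into the constant term.
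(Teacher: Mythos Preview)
The paper itself does not give a proof of \cref{thm:dict_test}; it simply states that the result follows by ``combining \cref{thm:kcsp} with the transformation from \cite{BKM1} converting integrality gap instances to dictatorship tests.'' Your proposal is precisely an elaboration of that combination, so at the level of strategy you are doing exactly what the paper intends: run the BKM1 dictatorship test on the gap instance, get completeness from the SDP value being $1$, and plug \cref{thm:kcsp} (applied to each local distribution $\mu_i$, which by hypothesis~(2b) has no Abelian embedding) into the soundness analysis in place of Mossel's connectedness lemma.

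There is, however, one genuine technical gap in your sketch of group~(ii). You split each $1$-bounded factor $\chi_{a_j}\circ f$ into its degree-$\le D$ part and its degree-$>D$ part, and then want to apply \cref{thm:kcsp} to a product in which one factor $g_j$ is the high-degree part and the remaining factors may be low-degree truncations. But a sharp degree truncation of a $1$-bounded function is not $1$-bounded in $L^\infty$ (only in $L^2$), so the hypothesis of \cref{thm:kcsp} is not met as written. The standard fix, and the one implicit in the BKM framework, is to replace the sharp cutoff by the noise operator and telescope: write
\[
\prod_{j} h_j - \prod_j \mathrm{T}_{1-\gamma} h_j \;=\; \sum_j \Big(\prod_{j'<j}\mathrm{T}_{1-\gamma}h_{j'}\Big)\,(h_j-\mathrm{T}_{1-\gamma}h_j)\,\Big(\prod_{j'>j} h_{j'}\Big),
\]
so that in each summand every factor except one is genuinely $1$-bounded (and the exceptional factor is $2$-bounded). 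Now \cref{thm:kcsp} applies and gives $\stab_{1-\delta}(h_j-\mathrm{T}_{1-\gamma}h_j)\ge\delta$, which is impossible once $\gamma$ is chosen small relative to $\delta$, since $\stab_{1-\delta}(h_j-\mathrm{T}_{1-\gamma}h_j)\le\max_d (1-\delta)^d(1-(1-\gamma)^d)^2$. After this replacement, the low-influence invariance-principle argument for the noised (hence essentially low-degree, and still $1$-bounded) functions goes through exactly as you describe. With this modification your outline is correct and matches the paper's intended argument.
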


In other words,~\cref{thm:dict_test} gives some evidence that for predicates with no Abelian embedding, the best possible approximation algorithm is still the natural SDP relaxation with an appropriate rounding scheme.

\subsubsection{Local Inverse Theorems under Milder Assumptions}
The proof of~\cref{thm:dict_test} requires a structural result for functions $f_1,\ldots,f_k$ 
exhibiting $k$-wise correlation under a milder assumption on $\mu$. Namely, instead of requiring that $\mu$ does not admit an Abelian embedding, we require it to be \emph{pairwise-connected} as well as that its marginal on the first $k-1$ coordinates, denoted by $\mu_{-k}$, admits no Abelian embedding.
\begin{definition}
    A distribution $\mu$ over $\Sigma_1\times\ldots\times\Sigma_k$ is pairwise-connected if for all $i\neq j$ the bipartite graph $(\Sigma_i\cup \Sigma_j, \supp(\mu_{i,j}))$ is connected.
\end{definition}
In this setting we have the following result: 
\begin{restatable}{lemma}{kcsp}
\label{lemma:kcsp}
Let $k$ be a positive integer, let $\mu$ be a pairwise-connected distribution over $\A_1 \times \dots \times \A_k$ in which the probability of each atom is at least $\alpha$, and assume that $\mu_{-k}$ admits no Abelian embeddings. Then, for every $\eps > 0$ there is $\delta := \delta(\alpha,\eps)>0$ such that for all sufficiently large $n$, if $1$-bounded functions $f_i: \A_i^n \to \bbC$ satisfy
\[ \left|\E_{(x_1,\dots,x_k) \sim \mu^{\otimes n}}\Big[\prod_{i=1}^k f_i(x_i)\Big]\right| \ge \eps, \]
then a random restriction of $f_1$ correlates to a product function, i.e.,
\[ \Pr_{I \sim_{1-\delta} [n], z \sim \mu_1^I} \left[\exists \{P_i: \A_1 \to \bbC, \|P_i\|_\infty \le 1\}_{i \in \bar{I}} \enspace \text{ with } \enspace \Big|\E_{x \sim \mu_1^{\bar{I}}}\Big[(f_1)_{I\to z}(x) \prod_{i \in \bar{I}} P_i(x_i) \Big]\Big| \ge \delta \right] \ge \delta. \]
 Quantitatively, $\delta = \exp(-\exp(\dots\exp(\eps^{-O_{\alpha}(1)})))$ where the number of exponentials is $k^{O(k)}$.
\end{restatable}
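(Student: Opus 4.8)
My plan is to induct on $k$ and to use the main theorem (Theorem~\ref{thm:kcsp}) as the base of the argument through the marginal $\mu_{-k}$. The key tension is that $\mu$ itself need not be Abelian-embedding-free — only its $(k-1)$-marginal is — so we cannot directly invoke Theorem~\ref{thm:kcsp} on all $k$ functions at once. The natural move is to "absorb" the last coordinate. Concretely, I would fix a typical restriction of $x_k$ and view $f_k(x_k)$ as a bounded scalar, reducing to a $(k-1)$-wise correlation for the distribution $\mu_{-k}$ — but this is too lossy because restricting one coordinate of $\mu^{\otimes n}$ is not the same as working with $\mu_{-k}^{\otimes n}$ (the other coordinates remain correlated with the $k$-th). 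Instead, the right approach is to use pairwise-connectedness to decouple: since every pair $(\Sigma_i,\Sigma_k)$ is connected, one can pass to a "decoupled" distribution by conditioning, or apply the invariance/regularity machinery so that the correlation with $f_k$ forces $f_k$ (and then, via the constraint structure, the earlier functions) to be low-degree-correlated. I expect the paper follows the former: run an inductive hypothesis on the $(k-1)$-ary problem after showing that the $k$-wise correlation, together with pairwise-connectedness, implies a $(k-1)$-wise correlation under $\mu_{-k}^{\otimes n}$ for a restricted $f_1$.

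The key steps, in order, would be: (1) \emph{Restriction and garbling.} Apply a random restriction $I \sim_{1-\delta}[n]$; on the unrestricted part, pairwise-connectedness lets us "average out" or reroute the $k$-th coordinate so that with probability $\ge \delta'$ over the restriction, the restricted $f_1$ has $|\E[\prod_i f_i'] | \ge \delta'$ for the marginal $\mu_{-k}$ applied to coordinates $1,\dots,k-1$, where $f_i'$ are suitable restrictions/averages of $f_i$. (2) \emph{Apply the Abelian-embedding-free hypothesis on $\mu_{-k}$.} Since $\mu_{-k}$ admits no Abelian embedding and all its atoms have mass $\ge \alpha^{?}$, Theorem~\ref{thm:kcsp} (applied to the $(k-1)$-ary distribution $\mu_{-k}$) gives $\stab_{1-\delta''}(f_1') \ge \delta''$, i.e., $f_1'$ has bounded Fourier mass below some degree $d$. (3) \emph{Lift "low degree" to "correlates with a product function after a further restriction."} A function of degree at most $d$ with $\|f_1'\|_2 \ge \delta''$ restricts, under a second random restriction keeping each coordinate with probability $1-\delta$, to something whose low-degree part collapses to a (random) product of univariate functions with probability $\ge \delta'''$ — this is the standard fact that a low-degree bounded function, under heavy random restriction, looks locally like a junta/product. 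Composing the two restrictions (each of the form $\sim_{1-\delta}[n]$) and pushing the group action of $\mu_1$ through, one obtains precisely the claimed statement: with probability $\ge \delta$ over $I\sim_{1-\delta}[n]$ and $z\sim\mu_1^I$, the restriction $(f_1)_{I\to z}$ correlates to $\prod_{i\in\bar I} P_i(x_i)$ with value $\ge \delta$. (4) \emph{Bookkeeping on the quantitative bound.} Each application of Theorem~\ref{thm:kcsp} contributes a tower of exponentials of height $k^{O(k)}$, and the induction on $k$ composes a bounded number of such towers, so the final height remains $k^{O(k)}$; the $O_\alpha(1)$ exponent in the innermost $\eps^{-O_\alpha(1)}$ is preserved because the atom-probability lower bound for $\mu_{-k}$ is a fixed power of $\alpha$.

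The main obstacle, and the step I would spend the most care on, is Step~(1): cleanly reducing the $k$-wise correlation under $\mu^{\otimes n}$ to a $(k-1)$-wise correlation under $\mu_{-k}^{\otimes n}$ (for a restricted $f_1$) without losing more than a constant-depending-on-$(\alpha,\eps)$ factor. The subtlety is that $f_k(x_k)$ is a genuine function of the $k$-th coordinate, which under $\mu^{\otimes n}$ is correlated with $x_1,\dots,x_{k-1}$; one cannot simply integrate it out. The resolution should use pairwise-connectedness together with an averaging/path argument (in the spirit of Mossel's analysis of connected distributions): connectedness of $(\Sigma_i \cup \Sigma_k, \supp(\mu_{i,k}))$ lets one "transport" the dependence on $x_k$ into a dependence on $x_i$ at the cost of a local correlation factor bounded below in terms of $\alpha$, after which $f_k$ can be effectively absorbed and the marginal $\mu_{-k}$ appears. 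I would expect this decoupling to be the technical heart, with everything downstream (invoking Theorem~\ref{thm:kcsp}, the low-degree-to-product-under-restriction lemma, and the tower bookkeeping) being comparatively routine given the tools already developed in this line of work.
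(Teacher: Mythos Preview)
Your high-level plan --- reduce to the $(k-1)$-ary problem on $\mu_{-k}$, invoke Theorem~\ref{thm:kcsp} there, then pass from noise stability to product-correlation under restriction --- matches the paper's architecture, but two of the three concrete steps have real gaps.

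\textbf{Step (1).} The paper does not use a path/transport argument to remove $f_k$. It applies Cauchy--Schwarz in the $x_k$ variable:
\[
\eps^2 \le \E_{((x_i),(x_i'))\sim \mu_{-k,-k}^{\otimes n}}\Big[\prod_{i<k} f_i(x_i)\overline{f_i(x_i')}\Big],
\]
where $\mu_{-k,-k}$ couples two samples of $\mu_{-k}$ through a shared $x_k$. Because the diagonal $\{x=x'\}$ sits inside $\supp(\mu_{-k,-k})$ with mass $\ge\alpha^2$, one writes $\mu_{-k,-k}=\alpha^2\mu_{-k}+(1-\alpha^2)\nu$ and performs a random restriction along $\nu$; on the surviving coordinates the distribution is genuinely $\mu_{-k}^{\otimes}$. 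Your ``reroute via pairwise-connectedness'' idea is not made precise, and I do not see how it would produce a clean $\mu_{-k}^{\otimes n}$ correlation without this squaring step. Pairwise-connectedness is in fact \emph{not} used at this stage of the paper's argument.

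\textbf{Steps (2)/(3).} Even granting the reduction, the inductive application of Theorem~\ref{thm:kcsp} yields
\[
\stab_{1-\delta_{k-1}}\bigl((f_1)_{I\to z_1}\,\overline{(f_1)_{I\to z_1'}}\bigr)\ge\delta_{k-1},
\]
i.e.\ stability of a \emph{product of two distinct restrictions} of $f_1$, not of a single $f_1'$. Your Step~(3) (``low degree $\Rightarrow$ product after further restriction'') would need to extract product-correlation of $f_1$ itself from this, which is not automatic: for instance when $z_1=z_1'$ the product is $|(f_1)_{I\to z_1}|^2$, which can be identically $1$ while $f_1$ is completely unstructured. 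The paper instead reinterprets the entire stability expression as a $3$-wise correlation
\[
\E_{(x,x',x^+)\sim\xi^{\otimes n}}\bigl[f_1(x)\,\overline{f_1(x')}\,\overline{g(x^+)}\bigr]
\]
over an auxiliary distribution $\xi$ on $\A_1\times\A_1\times\A_1^+$ with $\A_1^+=(\A_1\times\A_1)\cup\{\star\}$, checks that $\xi$ is pairwise-connected --- \emph{this} is where pairwise-connectedness of $\mu$ (specifically of $\mu_{1,k}$) is actually used --- and then applies the $3$-ary local inverse theorem (Theorem~\ref{thm:3csp}) to obtain product-correlation for $f_1$ directly. This $3$-ary reduction is the step your proposal is missing.
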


If we strengthen the assumption on $\mu_{-k}$, and assume that is is connected, then we can improve upon
the quantitative aspect of~\cref{lemma:kcsp}, and get the following result:
\begin{lemma}
\label{lemma:kcsp_conn}
Let $k$ be a positive integer, let $\mu$ be a pairwise-connected distribution over $\A_1 \times \dots \times \A_k$ in which the probability of each atom is at least $\alpha$, and assume that $\mu_{-k}$ is connected. Then, for every $\eps > 0$ there is a constant $\delta := \delta(\alpha,\eps)>0$ such that for all sufficiently large $n$, if $1$-bounded functions $f_i: \A_i^n \to \bbC$ satisfy
\[ \left|\E_{(x_1,\dots,x_k) \sim \mu^{\otimes n}}\Big[\prod_{i=1}^k f_i(x_i)\Big]\right| \ge \eps, \]
then a random restriction of $f_1$ correlates to a product function, i.e.,
\[ \Pr_{I \sim_{1-\delta} [n], z \sim \mu_1^I} \left[\exists \{P_i: \A_1 \to \bbC, \|P_i\|_\infty \le 1\}_{i \in \bar{I}} \enspace \text{ with } \enspace \Big|\E_{x \sim \mu_1^{\bar{I}}}\Big[(f_1)_{I\to z}(x) \prod_{i \in \bar{I}} P_i(x_i) \Big]\Big| \ge \delta \right] \ge \delta. \]
 Quantitatively, we can take $\delta = \exp(-\eps^{-O_{\alpha}(1)})$.
\end{lemma}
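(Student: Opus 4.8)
The plan is to peel off $f_k$ by conditioning, reducing to a $(k-1)$-wise correlation over the \emph{connected} distribution $\mu_{-k}$, and then to use Mossel's theorem~\cite{Mossel10} --- which for connected distributions is quantitatively mild (single-exponential, no tower) --- to pin down the structure of $f_1$. Concretely, let $F := \E_{x_k}\!\big[f_k(x_k)\,\big|\,x_1,\dots,x_{k-1}\big]$, which is a well-defined $1$-bounded function on $\A_1^n\times\dots\times\A_{k-1}^n$ because the conditional law of $x_k$ given $x_{<k}$ under $\mu^{\otimes n}$ is a product measure; then $\big|\E_{\mu_{-k}^{\otimes n}}[f_1\cdots f_{k-1}\,F]\big|\ge\eps$. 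The target of the main argument is to show that this forces $f_1$ to correlate globally with $\bar\chi\cdot h$, where $\chi$ is the first-block component of an Abelian-embedding character of $\mu$ and $h$ is a function of bounded degree with $\|h\|_2\le 1$; a soft argument then converts this into the claimed random-restriction statement.

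The heart of the proof is understanding $F$. Expanding $f_k=\sum_T \hat f_k(T)\prod_{j\in T}\tilde\phi^{(j)}$ in a centered product basis on $\A_k^n$ gives $F=\sum_T \hat f_k(T)\prod_{j\in T}\psi^{(j)}$, where $\psi^{(j)}(a_{<k}):=\E_{c\sim\mu(\cdot\mid a_{<k})}[\tilde\phi^{(j)}(c)]$ is, for each coordinate $j$, one of finitely many single-coordinate functions on $\A_1\times\dots\times\A_{k-1}$. Efron--Stein-decomposing each $\psi^{(j)}$ across the $k-1$ blocks and multiplying out, $F$ becomes a sum of products $\prod_{i<k}P_i(x_i)$ in which each $P_i$ is a product, over coordinates, of single-coordinate functions on $\A_i$. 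I would then classify each single-coordinate factor as either \emph{phase} (modulus $1$ almost everywhere --- these are exactly the factors that arise from Abelian embeddings of $\mu$ with a nonconstant $\a_k$) or \emph{contractive} ($L^2$-norm at most $1-\eta$ for a constant $\eta=\eta(\alpha)$), and argue, via Mossel's theorem applied to $\mu_{-k}$ together with hypercontractivity and a careful accounting of the (infinitely many) contributing terms, that the total contribution of all terms carrying at least $N$ contractive factors is at most $\eps/3$ once $N=\mathrm{poly}_\alpha(\log(1/\eps))$. Modulo this error, $F$ is a combination of (products of phase factors)$\times$(degree-$\le N$ remainders), and the products of phase factors that can occur are, by the structure theory of Abelian embeddings of $\mu$ from~\cite{BKM1}, parametrized by a bounded family of Abelian-embedding characters $\chi=(\chi^{(1)},\dots,\chi^{(k-1)})$ of $\mu$; thus $F=\sum_\chi \chi^{(1)}(x_1)\cdots\chi^{(k-1)}(x_{k-1})\,h_\chi(x_{<k})+E$ with $\|E\|_2\le\eps/3$ and each $h_\chi$ of degree $\le N$.

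Plugging this decomposition into $\big|\E_{\mu_{-k}^{\otimes n}}[f_1\cdots f_{k-1}F]\big|\ge\eps$ and pigeonholing over $\chi$, there is a character with $\big|\E_{\mu_{-k}^{\otimes n}}\big[(f_1\chi^{(1)})(x_1)\cdots(f_{k-1}\chi^{(k-1)})(x_{k-1})\,h_\chi(x_{<k})\big]\big|\ge\eps'$, where $\eps'$ is a single exponential in $\eps^{-O_\alpha(1)}$. Since $\mu_{-k}$ is connected, Mossel's theorem (in the form that a product of $k-1$ functions on the blocks, times a bounded-degree function on the whole product space, cannot have large expectation over a connected distribution unless every factor has a noticeable low-degree part) gives $\stab_{1-\delta}(f_1\chi^{(1)})\ge\delta$ for $\delta=\exp(-\eps^{-O_\alpha(1)})$; equivalently, $f_1$ correlates at level $\gtrsim\delta$ with $\bar\chi^{(1)}\cdot h$ for some $h$ of degree $O(1/\delta)$ with $\|h\|_2\le 1$. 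The final, routine step converts this to the statement of the lemma: writing $\chi^{(1)}(x_1)=\prod_j \omega((x_1)_j)$ for a fixed $1$-bounded single-coordinate $\omega$, a random restriction $I\sim_{1-\delta}[n],\ z\sim\mu_1^I$ turns $\bar\chi^{(1)}$ into a modulus-$1$ scalar times the product function $\prod_{i\in\bar I}\bar\omega(x_i)$, while $h_{I\to z}$ concentrates in $L^2$ on its own mean up to error $\sqrt{O(1/\delta)\cdot\delta}$ (a bounded-degree function restricted to a $\delta$-fraction of coordinates is essentially constant). Averaging $\E_{I,z}\big[|\langle (f_1)_{I\to z},(\bar\chi^{(1)}h)_{I\to z}\rangle|\big]\ge\delta$, discarding the rare events where $\|h_{I\to z}\|_2$ or $|\widehat{h_{I\to z}}(\emptyset)|$ is atypically large, and absorbing the error by shrinking $\delta$, we conclude that with probability $\ge\delta$ over $(I,z)$ the restricted function $(f_1)_{I\to z}$ correlates at level $\ge\delta$ with the product function $\prod_{i\in\bar I}\bar\omega(x_i)$, as required.

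The main obstacle is the second step. The subtle points are: (i) the family of relevant Abelian-embedding characters may be infinite --- the embedding group of $\mu$ can contain a copy of $\Z$, producing a continuum of characters --- so isolating a single $\chi$ requires exploiting that this group has bounded rank (from~\cite{BKM1}) together with a pigeonhole/compactness argument over its character torus; and (ii) the bookkeeping that bounds the aggregate of the infinitely many ``contractive'' terms must be done so as not to destroy Mossel's single-exponential bound --- this is exactly where the improvement over~\cref{lemma:kcsp} (whose proof, lacking the connectedness of $\mu_{-k}$, must instead invoke the tower-type $k$-ary inverse theorem applied to $\mu_{-k}$) comes from. The role of pairwise-connectedness of $\mu$ is to keep the conditional measures $\mu(\cdot\mid x_{<k})$ and the embedding structure non-degenerate throughout this accounting.
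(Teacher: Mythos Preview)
Your approach diverges substantially from the paper's and leaves a genuine gap at its core. The paper does not attempt any structure theorem for $F=\E[f_k\mid x_{<k}]$. Instead it Cauchy--Schwarzes out $f_k$, obtaining a $(k-1)$-wise correlation of the doubled functions $(x_i,x_i')\mapsto f_i(x_i)\overline{f_i(x_i')}$ over the doubled measure $\mu_{-k,-k}$; writing $\mu_{-k,-k}=\alpha^2\mu_{-k}+(1-\alpha^2)\nu$ with $\mu_{-k}$ sitting on the diagonal and random-restricting accordingly, the surviving correlation on $\bar I$ is an honest $(k-1)$-wise correlation over the connected $\mu_{-k}$ of the $1$-bounded functions $(f_i)_{I\to z_i}\overline{(f_i)_{I\to z_i'}}$, so Mossel's lemma (\cref{lemma:mossel}) applies directly with only polynomial loss. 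That noise stability is then repackaged as a $3$-wise correlation of $f_1$, $\overline{f_1}$, and an auxiliary function $g$ over a pairwise-connected $3$-ary distribution, and the $3$-ary local inverse theorem (\cref{thm:3csp}) yields the product-function conclusion with a single exponential. No decomposition of $F$, no Abelian-embedding characters, and no generalized Mossel statement are needed.

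In your outline, the decomposition $F=\sum_\chi \chi^{(1)}\cdots\chi^{(k-1)}h_\chi+E$ is asserted but not proved --- you yourself flag it as ``the main obstacle.'' Expanding $f_k$ in a product basis and pushing the conditional expectation through yields, after writing each single-coordinate factor $\psi^{(j)}$ as a finite sum of tensor products across the $k-1$ blocks, $C^{|T|}$ product terms for a monomial supported on $T$, hence exponentially many in $n$; you have only $\ell^2$ control on $\hat f_k$, and the ``at most $N$ contractive factors'' truncation does not by itself bound the aggregate of the remaining terms (nor is the claim that the modulus-$1$ coordinate factors coincide with Abelian-embedding characters justified). Separately, the form of Mossel's theorem you invoke in step four --- with an extra bounded-degree factor $h_\chi(x_{<k})$ living on the entire product space rather than on a single block --- is not \cref{lemma:mossel} and would require its own argument. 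Finally, in your global-to-local step the error $\sqrt{O(1/\delta)\cdot\delta}=O(1)$ swamps the target correlation $\delta$; this last point is fixable by restricting with a much smaller parameter, but as written it does not go through.
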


\subsubsection{A Global Inverse Theorem}
Finally, we state and prove a global inverse theorem 
in the setting of~\cref{lemma:kcsp}. By a global inverse theorem, we mean a result that 
makes an assertion regarding the global structure 
of the functions $f_i$'s, as opposed to only a local one after random restrictions. Indeed, combining~\cref{lemma:kcsp} and the restriction inverse theorem, namely~\cite[Theorem 9]{csp6}, we get the following conclusion.

\begin{theorem}
\label{thm:intro_kcsp_global}
Let $k$ be a positive integer, let $\mu$ be a distribution over $\A_1 \times \dots \times \A_k$ in which the probability of each atom is at least $\alpha$, and assume that $\mu_{-k}$ admits no Abelian embeddings. Then, for every $\eps > 0$ there are $\delta := \delta(\alpha, \eps)>0$ and $d = d(\alpha,\eps)\in\mathbb{N}$ such that for all sufficiently large $n$, if $1$-bounded functions $f_i: \A_i^n \to \bbC$ satisfy
\[ \left|\E_{(x_1,\dots,x_k) \sim \mu^{\otimes n}}\Big[\prod_{i=1}^k f_i(x_i)\Big]\right| \ge \eps, \]
then there exists $L\colon \Sigma_1^n\to\mathbb{C}$ of degree at most $d$ and $\|L\|_2\leq 1$, as well as 
a product function $P\colon \Sigma_1^n\to\mathbb{C}$ of the form $P(x) = \prod\limits_{i=1}^{n}P_i(x_i)$ where $|P_i(x_i)| = 1$ for all $i$ and $x$, such that 
$|\langle f, L\cdot P\rangle|\geq \delta$. 
\end{theorem}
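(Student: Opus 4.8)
The plan is to combine the local inverse theorem, Lemma~\ref{lemma:kcsp}, with the restriction inverse theorem of \cite[Theorem 9]{csp6} in a direct way; the statement is essentially an immediate corollary once the two pieces are lined up, but there are a couple of bookkeeping points worth spelling out. First I would invoke Lemma~\ref{lemma:kcsp} with the given $\mu$ (which here is assumed to have all atoms of probability at least $\alpha$, hence is in particular pairwise-connected since each pairwise marginal is fully supported on $\Sigma_i \times \Sigma_j$), obtaining a constant $\delta_1 = \delta_1(\alpha,\eps) > 0$ such that with probability at least $\delta_1$ over a $(1-\delta_1)$-random restriction $I \sim_{1-\delta_1} [n]$ and $z \sim \mu_1^I$, the restricted function $(f_1)_{I \to z}$ correlates at level $\delta_1$ with a product function $\prod_{i \in \bar{I}} P_i(x_i)$ on the surviving coordinates $\bar{I}$, where each $|P_i| \le 1$. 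In fact, since $(f_1)_{I\to z}$ is $1$-bounded, the product function witnessing the correlation can be taken with $|P_i(x_i)| = 1$ for all $i$ and all $x_i$: one simply replaces any $P_i$ with small values by its phase (dividing by $|P_i|$ pointwise where nonzero, and setting it to $1$ elsewhere), which only increases the correlation since $f_1$ is $1$-bounded and the inner product is a multilinear function of the $P_i$'s coordinate by coordinate — this normalization is standard and I would state it as a short claim.

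Next I would feed this conclusion into the restriction inverse theorem \cite[Theorem 9]{csp6}. That theorem takes as hypothesis exactly the statement ``a $(1-\delta_1)$-random restriction of $f_1$ correlates with probability $\ge \delta_1$ to a $1$-bounded product function at level $\ge \delta_1$'' and concludes that $f_1$ \emph{globally} correlates with a function of the form $L \cdot P$, where $L\colon \Sigma_1^n \to \mathbb{C}$ has degree at most $d = d(\alpha, \delta_1) = d(\alpha,\eps)$ and $\|L\|_2 \le 1$, and $P(x) = \prod_{i=1}^n P_i(x_i)$ with $|P_i(x_i)| = 1$ everywhere; the output correlation $\delta = \delta(\alpha,\delta_1) = \delta(\alpha,\eps) > 0$ is again an absolute function of $\alpha$ and $\eps$. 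Composing the two dependencies $\eps \mapsto \delta_1 \mapsto \delta$ and $\eps \mapsto \delta_1 \mapsto d$ gives the claimed $\delta(\alpha,\eps)$ and $d(\alpha,\eps)$, and chasing the quantitative bounds ($\delta_1$ is a tower of height $k^{O(k)}$ in $\eps^{-O_\alpha(1)}$, and \cite[Theorem 9]{csp6} loses at most another bounded number of exponentials) yields the stated form.

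The only genuine content beyond citing the two results is checking that the hypotheses of Lemma~\ref{lemma:kcsp} are met under the hypothesis of the theorem (``all atoms $\ge \alpha$'' $\Rightarrow$ pairwise-connected, and $\mu_{-k}$ admits no Abelian embedding is assumed directly), and verifying that the measure-space conventions match — in particular, that the $(1-\delta)$-noise/restriction operator used in the statement of Lemma~\ref{lemma:kcsp} is the one \cite[Theorem 9]{csp6} expects as input, and that $\mu_1$ is the relevant marginal in both places. These are routine once the definitions are aligned. I do not anticipate a real obstacle here: all the difficulty of Theorem~\ref{thm:intro_kcsp_global} is already absorbed into Lemma~\ref{lemma:kcsp} (whose proof is the technical heart of the paper) and into the externally cited restriction inverse theorem; the present statement is the clean packaging of their composition, so the ``proof'' is a short reduction plus the product-function phase-normalization claim.
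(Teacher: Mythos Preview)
Your approach matches the paper's exactly: the sentence immediately preceding the theorem says it follows by ``combining \cref{lemma:kcsp} and the restriction inverse theorem, namely \cite[Theorem 9]{csp6}'', and the paper gives no further argument.

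Two of your bookkeeping steps are incorrect, however. First, your justification that $\mu$ is pairwise-connected --- ``all atoms of probability at least $\alpha$, hence each pairwise marginal is fully supported on $\Sigma_i\times\Sigma_j$'' --- is wrong: throughout the paper the phrase ``the probability of each atom is at least $\alpha$'' refers to atoms \emph{in the support of $\mu$}, not to full support (compare \cref{thm:3csp} and \cref{lemma:mossel}, where pairwise-connectedness and connectedness appear as separate hypotheses alongside the identical atom condition). From the hypotheses of \cref{thm:intro_kcsp_global} you only obtain connectedness of the pairs $(i,j)$ with $i,j<k$, via \cref{obs:pc} applied to $\mu_{-k}$; the pairs $(i,k)$ are not covered. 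The paper introduces the theorem as ``a global inverse theorem in the setting of \cref{lemma:kcsp}'', so the pairwise-connectedness hypothesis of \cref{lemma:kcsp} is implicitly in force; you should carry that hypothesis explicitly (or note that it has been omitted from the statement) rather than manufacture a false implication.

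Second, your phase-normalization claim --- that replacing each $P_i$ by $P_i/|P_i|$ can only increase the correlation with a $1$-bounded $f$ --- is false in general (on $\{0,1\}$ with the uniform measure, $f=(1,-1)$ and $P=(1,0)$ give $\E[fP]=1/2$, but $\tilde P=(1,1)$ gives $\E[f\tilde P]=0$). This step is in any case unnecessary: the unimodularity $|P_i(x_i)|=1$ is part of the \emph{output} of \cite[Theorem 9]{csp6}, not something you need to arrange before invoking it.
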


\subsection{Subsequent and Future Works}
\paragraph{Subsequent works:} in~\cite{BKLMDHJ3}, we use~\cref{lemma:kcsp_conn} with $k=4$ to give the first reasonable bounds for the density Hales-Jewett theorem on $\{0,1,2\}^n$. The density Hales-Jewett problem refers to the problem of determining the maximum density of a subset $S \subseteq \{0,1,\ldots,k-1\}^n$ that avoids all {\em combinatorial lines} of length $k$. Here, we say that a $k$-tuple $x^{(1)}, \dots, x^{(k)} \in S$ forms a combinatorial line of length $k$ if not all $x^{(j)}$ are equal, and for each coordinate $i = 1, 2, \dots, n$, 
\[
(x^{(1)}_i,\ldots,x^{(k)}_i)
\in\left\{(0,\ldots,0),(1,\ldots,1),\ldots,(k-1,\ldots,k-1),(0,1,2,\ldots,k-1)\right\}.
\]
The density Hales-Jewett theorem, proved by Furstenberg and Katznelson \cite{FK89,FK91}, asserts that for every $k\in\mathbb{N}$ and $\delta > 0$, for sufficiently large $n\geq n_0(k,\delta)$, a subset $S \subseteq \{0,1,\dots,k-1\}^n$ of density at least $\delta$ must contain a combinatorial line of length $k$. The proof of Furstenberg and Katznelson is ergodic theoretic and gives no quantitative bounds on $n_0$. The Polymath project~\cite{polymath2012new} provided an elementary proof that does give quantitative bounds. For $k=3$, their result asserts that if $3^{-n}|S| \ge \Omega((\log^* n)^{-1/2})$, then $S$ contains a combinatorial line. 
In~\cite{BKLMDHJ3}, we improve upon this result by showing that if $S\subseteq \{0,1,2\}^n$ is such that $3^{-n}|S| \ge \Omega((\log\log\log\log n)^{-c})$ (for some absolute constant $c>0$), then $S$ contains a combinatorial line. 

\begin{remark}
~\cref{lemma:kcsp} is already good enough to get an effective bounds for the density Hales-Jewett theorem in $[3]^n$, except that the bound would have five logs as opposed to four. 
\end{remark}
\vspace{-1ex}
\paragraph{Future works:} the results of this paper 
can be seen as an extension of the results in~\cite{BKM2} to the setting of $k$-ary distributions with no Abelian embeddings. 
In future works, we plan to investigate the class
of $k$-ary distributions that do admit Abelian embeddings, and prove results along the lines of~\cite{BKM4,csp6} characterizing $k$-tuples of
functions that achieve non-trivial $k$-wise correlation. This 
scenario already includes within it,
(as special cases) the Gowers' uniformity norms over finite fields~\cite{Gowers01,green2008inverse,bergelson2010inverse,tao2012inverse,gowers2017quantitative}, and such inverse theorems may lead to 
further progress in related problems in additive combinatorics.

\section{Preliminaries}
\paragraph{Notations:} 
We denote $[n] = \{1,\ldots,n\}$, and for 
$\alpha\in (0,1)$ we denote by $I\sim_{1-\alpha} [n]$ the distribution over subsets of $[n]$ in which each element $i\in [n]$ is included with probability $1-\alpha$. For a real number $x$ we write $\|x\|_{\R/\Z} := \min_{z \in \Z} |x-z|$.
\label{sec:prelim}
\subsection{Noise Operators}
Let $(\A, \nu)$ be a probability space. Define the inner product on this space by $\langle f, g\rangle_\nu := \E_{x\in \nu}[f(x)\bar{g(x)}]$. We need the following definition of the noise operator.
\begin{definition}
	Let $\A$ be a finite alphabet, and $\nu$ be a measure on $\A$. For a parameter $\rho\in [0,1]$,
	we define the $\rho$-correlated distribution with respect to $\nu$ as follows. For any
	$y\in\A$, the distribution of inputs that are $\rho$-correlated with $y$ is denoted by $y'\sim \mathrm{T}_{\rho} y$
	and is defined by taking $y' = y$ with probability $\rho$, and otherwise sampling $y'\sim \nu$.
\end{definition}

We also view $\mathrm{T}_{\rho}$ as an operator on functions, mapping $L^2(\A,\nu)$ to $L^2(\A,\nu)$
defined as
\[
(\mathrm{T}_{\rho} g)(y) = \E_{y'\sim \mathrm{T}_{\rho} y}\left[ {g(y')}\right].
\]
We then tensorize this operator, i.e., consider $\mathrm{T}_{\rho}^{\otimes n}$ which acts on functions on $n$-variables.
When clear from context, we drop the $\otimes n$ superscript from notation.
The stability of a function $f: \Sigma^n \rightarrow \mathbb{C}$ with respect to the noise parameter $\rho$ is defined as follows.

\begin{definition}\label{def:stab}
	$\mathsf{Stab}^\nu_{\rho}(g) := \langle g, \mathrm{T}_{\rho} g\rangle_{\nu^{\otimes n}}$. We often drop the superscript $\nu$ from $\mathsf{Stab}^\nu_{\rho}(g)$ when it is clear from the context.
\end{definition}

\subsection{Inverse Theorems}
In this section, we state the inverse theorems that we need to prove our results. 
Our proof of~\cref{thm:kcsp} is by induction on $k$, and the base case $k=3$ was proven in~\cite{BKM2}, as follows:
 \begin{theorem} 
\label{thm:3csp_bkm2}
Let $\mu$ be a distribution over $\A_1 \times \A_2 \times \A_3$ that does not admit any Abelian embedding, and in which the probability of each atom is at least $\alpha$. Then, for all $\eps > 0$ there is a constant $\delta := \delta(\alpha,\eps)$ such that for all sufficiently large $n$, if $1$-bounded functions $f_i: \A_i^n \to \bbC$ satisfy
\[ \left|\E_{(x_1, x_2,x_3) \sim \mu^{\otimes n}}\Big[\prod_{i=1}^3 f_i(x_i)\Big]\right| \ge \eps, \]
then $\stab_{1-\delta}(f_i) \ge \delta$ for all $i = 1, 2, 3$. Quantitatively, $\delta(\eps) \ge \exp(-\eps^{-O_{\alpha}(1)})$.
\end{theorem}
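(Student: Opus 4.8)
The plan is to argue the contrapositive in the form: if $\big|\E_{(x_1,x_2,x_3)\sim\mu^{\otimes n}}[f_1(x_1)f_2(x_2)f_3(x_3)]\big|\ge\eps$ then $\stab_{1-\delta}(f_1)\ge\delta$; since the hypothesis on the correlation is symmetric in the three coordinates, relabeling then yields the same conclusion for $f_2$ and $f_3$. The first step is a purely combinatorial reduction to the case where $\mu$ is pairwise-connected: if some pairwise marginal $\mu_{ij}$ were disconnected as a bipartite graph, then assigning to each $a\in\Sigma_i$ (resp.\ $a\in\Sigma_j$) the index of the connected component containing it produces non-constant maps into $\Z/r$ with $r\ge2$, which together with the zero map on the remaining coordinate form an Abelian embedding of $\mu$, contradicting the hypothesis. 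Hence from now on $\mu$ may be assumed pairwise-connected.

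Next I would dichotomize on whether the support graph of $\mu$ (a vertex for each atom, an edge between two atoms differing in exactly one coordinate) is connected. If it is, then $\mu$ is a connected distribution and a quantitative form of Mossel's invariance principle for connected distributions~\cite{Mossel10} finishes the argument: across the cut separating coordinate $1$ from coordinates $\{2,3\}$, the space $(\Sigma_1\times\Sigma_2\times\Sigma_3,\mu)^{\otimes n}$ behaves like a $\rho$-correlated space for some $\rho=\rho(\alpha)<1$, so $\big|\E[f_1 f_2 f_3]\big|$ is controlled by the $\mathrm{T}_{\rho}$-norm of the high-degree part of $f_1$, and $\stab_{1-\delta}(f_1)$ being small would push this below $\eps$; tracking the constants gives at least the stated bound $\delta\ge\exp(-\eps^{-O_\alpha(1)})$. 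The substantive case is when the support graph is disconnected, which is precisely where the no-Abelian-embedding hypothesis is used beyond securing pairwise-connectedness. Here I would extract a ``macroscopic'' group structure: from the atoms of $\mu$ one forms a finite group $H$ together with maps $\tau_i:\Sigma_i\to H$, not all constant, satisfying $\tau_1(a_1)\tau_2(a_2)\tau_3(a_3)=1_H$ on every atom and whose $H^{\otimes n}$-valued statistics witness the disconnectedness; the no-Abelian-embedding hypothesis then forces $H$ to admit no non-trivial one-dimensional representation, since any such representation composed with the $\tau_i$ would be an Abelian embedding of $\mu$. This is the general form of the motivating example $P^{-1}(1)=\{xyz=1_G\}$ for a non-Abelian group $G$ from~\cite{BhangaleK21}.

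The final and hardest part is a single analytic argument handling both levels: conditioned on the macroscopic $H^{\otimes n}$-data the distribution restricts to a connected one (so Mossel-type control is available ``within'' it), while the dependence on the macroscopic data is that of the relation $\tau_1\tau_2\tau_3=1_H$ (so one needs the non-Abelian Fourier / Schur-orthogonality argument, using that every non-trivial irreducible representation of $H$ has dimension at least $2$). Concretely I would: (i) set up a notion of degree/influence that simultaneously refines the Efron--Stein degree of the within-level conditional distribution and the $H^{\otimes n}$-representation type; (ii) prove an invariance principle replacing $\mu^{\otimes n}$ by a Gaussian (or otherwise tractable) model matching all pairwise second moments and the $H$-structure; (iii) on the model, bound the correlation by combining hypercontractivity with the fact that each non-trivial representation block loses a factor of at most $\tfrac12$ per non-trivial coordinate, so that the only non-negligible contribution is the one that is macroscopically trivial and of bounded within-level degree; and (iv) note that such a surviving contribution forces $\stab_{1-\delta}(f_1)$ to be large. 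The main obstacle I anticipate is fusing the invariance principle of step (ii) with the representation-theoretic decay of step (iii) inside one estimate while losing only singly exponentially in $1/\eps$, so that the claimed $\delta\ge\exp(-\eps^{-O_\alpha(1)})$ genuinely comes out; by contrast the two combinatorial reductions, the extraction of $H$, and the quantitative connected-case statement should be comparatively routine given~\cite{Mossel10,BhangaleK21}.
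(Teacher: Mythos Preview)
The paper does not actually prove \cref{thm:3csp_bkm2}: it is stated in the preliminaries as a result of~\cite{BKM2} and used as the base case of the induction. So there is no argument in this paper to compare your plan against beyond the citation.

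That said, your plan has a genuine gap at the crucial step. The reduction to pairwise-connectedness and the connected case via Mossel are fine. The problem is the sentence ``from the atoms of $\mu$ one forms a finite group $H$ together with maps $\tau_i:\Sigma_i\to H$, not all constant, satisfying $\tau_1(a_1)\tau_2(a_2)\tau_3(a_3)=1_H$ on every atom.'' For a general pairwise-connected $\mu$ with no Abelian embedding and disconnected support graph, there is no reason such a factorization through a group should exist; the example from~\cite{BhangaleK21} starts from a group multiplication constraint and analyzes it, it does not give a construction that manufactures a group from an arbitrary support. Nor does disconnectedness of the support graph hand you such maps: disconnectedness only says that atoms cannot all be reached by single-coordinate flips, which is a far weaker statement than the existence of a non-Abelian ``syzygy'' $\tau_1\tau_2\tau_3=1_H$. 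Consequently the subsequent steps (conditioning on the ``macroscopic $H^{\otimes n}$-data'' to get a connected distribution inside each fiber, and using dimension-$\ge 2$ representation decay across fibers) rest on an object that has not been produced.

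For orientation, the actual argument in~\cite{BKM2} does not go through extracting a non-Abelian group. It proceeds by a ``path trick'' (iterated Cauchy--Schwarz that replaces $\mu$ by an auxiliary distribution with larger support), together with a symbol-merging induction that collapses pairs of symbols indistinguishable under all Abelian embeddings; the no-Abelian-embedding hypothesis is what guarantees the induction terminates before the alphabets become trivial. If you want to attempt an independent proof, the right replacement for your step~(ii)--(iii) is some mechanism that enlarges or simplifies $\supp(\mu)$ while preserving the no-Abelian-embedding property, rather than trying to read off a group.
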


We will also need the following local inverse theorem from~\cite{csp6}. Here, the only assumption on the distribution $\mu$ is that it is pairwise-connected. 
\begin{theorem}
\label{thm:3csp}
Let $\mu$ be a pairwise-connected distribution over $\A \times \B \times \C$, and assume that the probability of each atom is at least $\alpha$. If $1$-bounded functions $f: \A^n \to \bbC$, $g: \B^n \to \bbC$, $h: \C^n \to \bbC$ satisfy that
\[ \left|\E_{(x,y,z) \sim \mu^{\otimes n}}[f(x)g(y)h(z)] \right| \ge \eps, \] 
then there is a constant $\delta := \delta(\alpha,\eps) > 0$ and distribution $\nu$ such that $\mu = (1-\delta)\nu + \delta U$, where $U$ is uniform over $\A$, such that:
\[ \Pr_{I \sim_{1-\delta} [n], z \sim \nu^I}\left[\exists \{P_i: \A \to \bbC, \|P_i\|_\infty \le 1\}_{i \in \bar{I}} \enspace \text{ with } \enspace \Big|\E_{x \sim \A^{\bar{I}}}\Big[f_{I\to z}(x) \prod_{i \in \bar{I}} P_i(x_i) \Big]\Big| \ge \delta \right] \ge \delta. \]
Quantitatively, $\delta(\eps) \ge \exp(-\eps^{-O_{\alpha}(1)})$.
\end{theorem}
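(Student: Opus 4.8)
\medskip
\noindent\emph{Proof strategy for Theorem~\ref{thm:3csp}.}
The plan is to establish a dichotomy for $f$: either $f$ has non‑negligible Fourier mass on low degrees (with respect to $\mu_1^{\otimes n}$), or $f$ is, up to a bounded‑degree factor, of the form $\prod_i(\chi\circ\sigma)(x_i)$ for a character $\chi$ of an Abelian group $G$ into which $\mu$ embeds. The pivot between the two cases is whether $\mu$ admits an Abelian embedding $(\sigma,\gamma,\phi)$ with $\sigma$ non‑constant; the no‑embedding base case is exactly Theorem~\ref{thm:3csp_bkm2}. In both cases the desired conclusion follows because a \emph{generic} random restriction of a (twisted, bounded‑degree) function has a non‑negligible constant term.

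Two soft ingredients are used throughout. First, from $\left|\E_{\mu^{\otimes n}}[f(x)g(y)h(z)]\right|\ge\eps$ and $\|g\|_\infty,\|h\|_\infty\le 1$ one gets $\|f\|_2\ge\|f\|_1\ge\eps$ by conditioning on $x$ and bounding $|\E[g(y)h(z)\mid x]|\le 1$. Second, suppose some $1$‑bounded $F\colon\A^n\to\bbC$ has Fourier mass at least $\delta_0$ on degrees at most $d_0$. Choosing $I\sim_{1-\delta}[n]$ with $\delta\le c/d_0$ and $z\sim\nu^I$ (where $\nu$ is the near‑uniformization of $\mu_1$ forced by $\mu=(1-\delta)\nu+\delta U$, which costs only lower‑order factors), one has $\E_{I,z}\big[|\widehat{F_{I\to z}}(\emptyset)|^2\big]\ge\sum_{|S|\le d_0}(1-\delta)^{|S|}|\hat F(S)|^2\ge\delta_0/2$, and since $|\widehat{F_{I\to z}}(\emptyset)|\le 1$ always, this forces $\Pr_{I,z}\big[|\widehat{F_{I\to z}}(\emptyset)|\ge\sqrt{\delta_0}/2\big]\ge\delta_0/4$. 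If $\mu$ admits no Abelian embedding, apply Theorem~\ref{thm:3csp_bkm2} to get $\stab_{1-\delta_0}(f)\ge\delta_0$, hence Fourier mass $\ge\delta_0/2$ on degrees $\le d_0=O(\delta_0^{-1}\log(1/\delta_0))$ (by the Remark after Theorem~\ref{thm:kcsp}), and the second ingredient applied to $F=f$ finishes with all $P_i\equiv 1$. If $\mu$ admits embeddings but all of them have $\sigma$ constant, then for each such embedding $\Pi_\chi(x):=\prod_i(\chi\circ\sigma)(x_i)\equiv 1$, so the identity $\Pi_\chi\Psi_\chi\Xi_\chi\equiv 1$ on $\supp(\mu^{\otimes n})$ lets us replace $(g,h)$ by $(g\cdot\overline{\Psi_\chi},\,h\cdot\overline{\Xi_\chi})$ without changing $\E[fgh]$ or $f$; iterating over a generating set of embeddings reduces to a residual configuration with no Abelian embedding, and Theorem~\ref{thm:3csp_bkm2} again applies to $f$.

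The substantive case is that $\mu$ admits an Abelian embedding $(\sigma,\gamma,\phi)$ with $\sigma$ non‑constant; take $G$ generated by $\{\sigma(a)-\sigma(a'):a,a'\in\A\}$ and, after a finite ``non‑degeneracy'' reduction passing $G$ to a quotient, assume $\chi\circ\sigma$ is non‑constant for every non‑trivial $\chi\in\hat G$ (so $|\E_{\mu_1}[\chi\circ\sigma]|<1$ then). The key structural claim is: there exist $\chi\in\hat G$, a constant $\delta_1=\delta_1(\alpha,\eps)$, and $\ell\colon\A^n\to\bbC$ of degree $\le d_1(\alpha,\eps)$ with $\|\ell\|_2\le 1$ such that $|\langle f,\Pi_\chi\cdot\ell\rangle|\ge\delta_1$. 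Granting this, put $f'=\overline{\Pi_\chi}\cdot f$; then $|\langle f',\ell\rangle|\ge\delta_1$, so $f'$ has Fourier mass $\ge\delta_1^2$ on degrees $\le d_1$, and the second soft ingredient (applied to $F=f'$) produces an $\Omega(\delta_1^2)$‑fraction of $(I,z)$ with $|\widehat{f'_{I\to z}}(\emptyset)|\ge\delta_1/2$. For such $(I,z)$, taking $P_i=\overline{\chi\circ\sigma}$ for $i\in\bar I$ and using $|\chi\circ\sigma|\equiv 1$ to cancel the $\bar I$‑factors gives $\big|\E_x[f_{I\to z}(x)\prod_{i\in\bar I}P_i(x_i)]\big|=\big|\prod_{i\in I}(\chi\circ\sigma)(z_i)\big|\cdot|\widehat{f'_{I\to z}}(\emptyset)|\ge\delta_1/2$, which is the conclusion. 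Chaining the quantitative dependencies ($\delta_0,\delta_1=\exp(-\eps^{-O_\alpha(1)})$) yields $\delta=\exp(-\eps^{-O_\alpha(1)})$.

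To prove the structural claim I would follow the BKM2 template of ``subtracting off the product part and invoking the no‑embedding theorem.'' Lift $f$ along $\sigma$ so that each coordinate carries a well‑defined $G$‑grading even when $\sigma$ is non‑injective, and expand $\E_{\mu^{\otimes n}}[fgh]$ along this grading: the diagonal terms are precisely $\sum_{\chi}\langle f,\Pi_\chi\ell_\chi\rangle$‑type contributions for the $\chi$‑components $\ell_\chi,\ell'_\chi,\ell''_\chi$ of $f,g,h$, and the off‑diagonal terms vanish as $n\to\infty$ because they pair a product function of unbounded degree (whose relevant single‑coordinate factor $\chi'\circ\sigma$ has modulus‑of‑mean strictly below $1$, by non‑degeneracy) against bounded objects. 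If some $\ell_\chi$ is of bounded degree with non‑negligible norm we are done; otherwise the ``$\chi$‑graded but genuinely high‑degree'' part of $f$ carries no correlation, and after removing every product‑function direction one is reduced to a three‑function correlation governed by the quotient distribution $\bar\mu$ on $\bar\A\times\bar\B\times\bar\C$ obtained by collapsing the $G$‑fibers of $\sigma,\gamma,\phi$ — a distribution that by construction admits no Abelian embedding, so Theorem~\ref{thm:3csp_bkm2} forces the residual function to be of bounded degree, giving the claimed decomposition of $f$. I expect this last step to be the main obstacle: making the $G$‑graded expansion rigorous for non‑injective $\sigma,\gamma,\phi$, bounding the off‑diagonal terms with only polynomial loss, ensuring the non‑degeneracy reduction on $G$ terminates in $O_\alpha(1)$ steps (so the total loss stays single‑exponential), and constructing $\bar\mu$ so that it is genuinely embedding‑free and correlations transfer faithfully between $\mu$ and $\bar\mu$. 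The remaining pieces — the two soft ingredients, the easy cases, and the passage from a low‑degree twisted function to the random‑restriction statement — are routine.
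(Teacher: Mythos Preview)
This theorem is not proved in the present paper; it is quoted from~\cite{csp6} as a black-box input (see the sentence introducing it in Section~\ref{sec:prelim}). So there is no in-paper proof to compare against, and I assess your proposal on its own terms.

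Your overall architecture is reasonable: dispatch the no-embedding case via Theorem~\ref{thm:3csp_bkm2}, and in the embedding case prove a structural claim that $f$ correlates with $\Pi_\chi\cdot\ell$ for some character $\chi$ and bounded-degree $\ell$, then convert to the random-restriction statement. The passage from the structural claim to the conclusion (your second soft ingredient applied to $F=\overline{\Pi_\chi}\,f$, with $P_i=\overline{\chi\circ\sigma}$) is correct, up to the measure changes $\mu_1\leftrightarrow\nu\leftrightarrow U$, which indeed cost only $O_\alpha(1)$ factors since every atom of $\mu_1$ has mass at least $\alpha$.

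Two remarks on the case analysis. First, your Case~2 is vacuous under the hypotheses: if $\mu$ is pairwise-connected and $(\sigma,\gamma,\phi)$ is an embedding with $\sigma$ constant, then $\gamma(y)+\phi(z)$ is constant on $\supp(\mu_{23})$, and connectedness of the bipartite graph on $\B\cup\C$ forces $\gamma$ and $\phi$ to be constant as well. So every nontrivial embedding already has $\sigma$ nonconstant, and your twisting manoeuvre is not needed (and, as written, does not do what you claim: twisting $g,h$ by unit-modulus product functions changes neither $\mu$ nor its set of embeddings, so no ``reduction to no Abelian embedding'' takes place).

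Second, and this is the substantive point: your structural claim in Case~3 is precisely the hard content of the theorem, and your sketch does not prove it. The $G$-graded expansion you describe does not cleanly decouple; the off-diagonal terms are not negligible merely because $|\E_{\mu_1}[\chi'\circ\sigma]|<1$ for nontrivial $\chi'$, since $f,g,h$ may themselves carry all their mass along product-function directions. The quotient step --- collapsing $G$-fibers to obtain $\bar\mu$, transferring the correlation, and applying Theorem~\ref{thm:3csp_bkm2} there --- is exactly where the difficulty lives: one must show $\bar\mu$ is still pairwise-connected with $\Omega_\alpha(1)$ atom mass, that it is genuinely embedding-free (which requires working with a \emph{maximal} embedding rather than an arbitrary one), and that the three-function correlation survives the quotient with only polynomial loss. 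This is the substance of the sequence~\cite{BKM3,BKM4,BKM5,csp6}, not a routine verification; your own list of obstacles is accurate. In particular the single-exponential bound $\delta\ge\exp(-\eps^{-O_\alpha(1)})$ is sensitive to how many reduction rounds one performs, and ``terminates in $O_\alpha(1)$ steps'' needs an actual potential argument. As it stands, your proposal correctly identifies the target and the downstream deduction, but does not supply the mechanism to reach the target.
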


Finally, in order to get improved bounds in Lemma~\ref{lemma:kcsp_conn}, instead of Theorem~\ref{thm:3csp_bkm2}, we use the following rest of Mossel~\cite{Mossel10}. 

\begin{lemma}[Lemma 6.2 from~\cite{Mossel10}]
\label{lemma:mossel}
Let $k$ be a positive integer and $\mu$ be a distribution over $\A_1 \times \dots \times \A_k$ such that $\mu$ is {\em connected} and the probability of each atom is at least $\alpha$. 
Then, for all $\eps > 0$ there is a constant $\delta := \delta(\alpha,\eps)$ such that for all sufficiently large $n$, if $1$-bounded functions $f_i: \A_i^n \to \bbC$ satisfy
\[ \left|\E_{(x_1,\dots,x_k) \sim \mu^{\otimes n}}\Big[\prod_{i=1}^k f_i(x_i)\Big]\right| \ge \eps, \]
then $\stab_{1-\delta}(f_i) \ge \delta$ for all $i \in [k]$. Quantitatively, $\delta(\eps, \alpha) \ge (\eps\cdot \alpha)^{O(1)}$.
\end{lemma}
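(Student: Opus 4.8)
This is Lemma~6.2 of~\cite{Mossel10}; I sketch the argument. By relabeling coordinates it suffices to establish the conclusion for $i=1$. Regroup the last $k-1$ coordinates: set $\Lambda := \A_2\times\cdots\times\A_k$, view $\mu$ as a distribution on $\A_1\times\Lambda$, let $y\in\Lambda^n$ be the regrouped coordinates, and $g(y):=\prod_{i=2}^k f_i(x_i)$, which is $1$-bounded on $\Lambda^n$; the hypothesis reads $\big|\E_{\mu^{\otimes n}}[f_1(x_1)\,g(y)]\big|\ge\eps$. The proof has two ingredients: a quantitative contraction bound that uses connectedness, and a standard smoothing/Efron--Stein step.

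\textbf{Step 1: connectedness yields a contraction.} Let $U\colon L^2(\A_1,\mu_1)\to L^2(\Lambda,\mu_\Lambda)$ be the conditional expectation $(Uf)(y)=\E_\mu[f(x_1)\mid y]$; it fixes constants and has operator norm $\le 1$. I claim that if $\mu$ is connected, then the norm of $U$ restricted to mean-zero functions is $\rho_0:=1-\mathrm{poly}(\alpha)<1$. First, connectedness of the single-coordinate-change graph of $\mu$ transfers to connectedness of the bipartite graph with parts $\A_1,\Lambda$ and edge set $\supp(\mu)\subseteq\A_1\times\Lambda$: a single-coordinate change of an atom leaves either its $\A_1$-coordinate or its $\Lambda$-coordinate fixed, so consecutive atoms along a path share an endpoint in this bipartite graph, and any walk in the single-change graph of $\mu$ projects to a walk here. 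Since each atom has mass $\ge\alpha$, this graph has at most $2/\alpha$ vertices. The reversible ``round-trip'' chain $x_1\mapsto y\mapsto x_1'$ by conditional resampling (whose operator is $U^*U$) is therefore irreducible, has stationary mass $\ge\alpha$ everywhere, and any cut-crossing edge $a,b$ of it (witnessed by some $z$ with $(a,z),(b,z)\in\supp(\mu)$) carries weight $\pi(a)P(a,b)\ge \mu(a,z)\mu(b,z)/\mu_\Lambda(z)\ge\alpha^2$. Hence the conductance is at least $\alpha^2$, so by Cheeger's inequality the spectral gap of $U^*U$ is $\ge\mathrm{poly}(\alpha)$, giving $\|U\|_{\mathrm{mean\text{-}zero}}=(1-\mathrm{gap})^{1/2}\le\rho_0$.

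\textbf{Steps 2--3: smoothing and stability.} Tensorizing, $\E_{\mu^{\otimes n}}[f_1(x_1)g(y)]=\E_{y\sim\mu_\Lambda^{\otimes n}}\big[(U^{\otimes n}f_1)(y)\,g(y)\big]$. Expand $f_1=\sum_{S\subseteq[n]}f_1^{=S}$ in its Efron--Stein decomposition with respect to $\mu_1^{\otimes n}$ and write $f_1^{\le d}:=\sum_{|S|\le d}f_1^{=S}$. Because $U$ fixes constants and contracts mean-zero functions by $\rho_0$, the operator $U^{\otimes n}$ sends $f_1^{=S}$ into the corresponding orthogonal piece on $\Lambda^n$ with norm $\le\rho_0^{|S|}$, whence $\|U^{\otimes n}(f_1-f_1^{\le d})\|_2\le\rho_0^{\,d}$. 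By the triangle inequality, Cauchy--Schwarz, $\|U^{\otimes n}\|_{\mathrm{op}}\le1$ and $\|g\|_2\le1$,
\[
\eps \;\le\; \big|\E_{\mu^{\otimes n}}[f_1 g]\big| \;\le\; \|f_1^{\le d}\|_2 + \rho_0^{\,d}.
\]
Taking $d:=\lceil \log(2/\eps)/\log(1/\rho_0)\rceil = O\big(\mathrm{poly}(\alpha^{-1})\log(1/\eps)\big)$ forces $\|f_1^{\le d}\|_2^2\ge\eps^2/4$. Finally $\stab_{1-\delta}(f_1)=\sum_S(1-\delta)^{|S|}\|f_1^{=S}\|_2^2\ge(1-\delta)^d\|f_1^{\le d}\|_2^2\ge \tfrac12\cdot\tfrac{\eps^2}{4}$ once $\delta\le1/(2d)$; so setting $\delta:=\min\{\eps^2/8,\;1/(2d)\}$ gives $\stab_{1-\delta}(f_1)\ge\delta$, and this $\delta$ is $\ge(\eps\alpha)^{O(1)}$ since $1/d=\Omega(\mathrm{poly}(\alpha)/\log(1/\eps))\ge(\eps\alpha)^{O(1)}$. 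Running the same argument for each $i\in[k]$ (grouping the other $k-1$ functions) proves the lemma; note that no largeness of $n$ is needed.

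\textbf{Main obstacle.} Everything downstream of Step~1 is the routine noise-stability/Efron--Stein calculus. The delicate point is Step~1: checking that \emph{connectedness} of $\mu$ — precisely the hypothesis being invoked — passes to connectedness of the regrouped bipartite graph (so that the round-trip chain is irreducible), and then extracting a contraction factor $\rho_0$ that is $1-\mathrm{poly}(\alpha)$ rather than exponentially close to $1$. The latter rests on the observation that every cut-crossing edge of the round-trip chain already carries weight $\mathrm{poly}(\alpha)$, so no graph-diameter factor enters the conductance bound.
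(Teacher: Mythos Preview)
The paper does not prove this lemma; it is quoted verbatim as Lemma~6.2 of~\cite{Mossel10} and used as a black box. So there is no ``paper's own proof'' to compare against. That said, your sketch is a correct reconstruction of the standard argument and would stand on its own. The two ingredients you isolate are exactly the right ones: (i) connectedness of the single-coordinate-change graph on $\supp(\mu)$ forces the bipartite graph on $\A_1\times(\A_2\times\cdots\times\A_k)$ to be connected, so the round-trip chain $U^*U$ on $\A_1$ is irreducible with spectral gap $\mathrm{poly}(\alpha)$ (your Cheeger/conductance computation is fine, since every crossing transition already has weight $\ge\alpha^2$); and (ii) the tensorized contraction $\|U^{\otimes n}f_1^{=S}\|_2\le\rho_0^{|S|}\|f_1^{=S}\|_2$ together with Cauchy--Schwarz forces $\|f_1^{\le d}\|_2\ge\eps/2$ for $d=O(\mathrm{poly}(\alpha^{-1})\log(1/\eps))$, from which $\stab_{1-\delta}(f_1)\ge\delta$ for $\delta=(\eps\alpha)^{O(1)}$ follows.

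One small point worth making explicit in your write-up: the contraction bound $\|U^{\otimes n}(f_1-f_1^{\le d})\|_2\le\rho_0^d$ is cleanest to see by writing $\|U^{\otimes n}h\|_2^2=\langle h,(U^*U)^{\otimes n}h\rangle$ and using that the positive semidefinite operator $U^*U$ preserves each Efron--Stein subspace (since it fixes constants and maps mean-zero to mean-zero), rather than arguing directly that $U^{\otimes n}$ sends Efron--Stein pieces on $\A_1^n$ to orthogonal pieces on $\Lambda^n$. Your final remark that no largeness of $n$ is needed is also correct; the ``sufficiently large $n$'' in the statement is harmless slack.
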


\section{$k$-CSPs Without Abelian Embeddings}
\label{sec:kcsp}

In this section we prove \cref{thm:kcsp}. The proof proceeds by induction on $k$, and critically uses \cref{thm:3csp_bkm2} and \cref{thm:3csp} to perform the inductive step. 

\paragraph{The structure of the proof and notations:}
the base case of the proof, namely the case of $k=3$ in~\cref{thm:kcsp}, is given by~\cref{thm:3csp_bkm2}. We
will show that for each $k$,~\cref{thm:kcsp} for
$k$ implies~\cref{lemma:kcsp} for $k+1$, which in turn implies~\cref{thm:kcsp} for $k+1$. Thus,
we denote by $\delta_k(\alpha,\eps)$ the parameter of~\cref{thm:kcsp} for $\eps$, and by 
$\tau_k(\alpha,\eps)$ the parameter $\delta$ from~\cref{lemma:kcsp}
for $\eps$. With this, we have that $\delta_3(\alpha,\eps) = \exp(-\eps^{-O_{\alpha}(1)})$, and we will show bounds for $\tau_k(\alpha,\eps)$ in terms of $\delta_k(\alpha,\eps)$, and bounds for $\delta_{k+1}(\alpha,\eps)$ in terms of $\tau_k(\alpha,\eps)$.
We often suppress the dependency on $\alpha$ to simplify notation.

\subsection{Preliminary Observations}
\label{subsec:obs}

We start with the following two simple observations.
\begin{obs}
\label{obs:pc}
If the support of $\mu$ admits no Abelian embedding, then $\mu$ is pairwise-connected.
\end{obs}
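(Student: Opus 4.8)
\textbf{Proof plan for Observation~\ref{obs:pc}.}
The plan is to prove the contrapositive: if $\mu$ is \emph{not} pairwise-connected, then its support admits an Abelian embedding. So suppose there exist indices $i\neq j$ such that the bipartite graph $H := (\Sigma_i \cup \Sigma_j, \supp(\mu_{i,j}))$ is disconnected. Since every atom of $\mu$ has probability at least $\alpha>0$, every vertex of $\Sigma_i$ and $\Sigma_j$ actually appears in $\supp(\mu_{i,j})$, so $H$ has no isolated vertices; being disconnected, it therefore splits into at least two connected components, each a nonempty bipartite graph on a subset of $\Sigma_i\sqcup\Sigma_j$.

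The key idea is to use the component structure to build a $\{0,1\}$-valued (hence $\Z$-valued) embedding into the Abelian group $G=\Z$ (or $\Z/2\Z$). Fix one connected component $S$ of $H$; write $S_i = S\cap \Sigma_i$ and $S_j = S\cap \Sigma_j$, and note $S_i$ is a proper nonempty subset of $\Sigma_i$ (and similarly $S_j\subsetneq\Sigma_j$), since there is at least one other component. Define $\sigma_i : \Sigma_i \to \Z$ by $\sigma_i(a) = \one[a\in S_i]$, define $\sigma_j : \Sigma_j \to \Z$ by $\sigma_j(b) = -\one[b\in S_j]$, and set $\sigma_\ell \equiv 0$ for every $\ell\notin\{i,j\}$. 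Not all of these maps are constant because $S_i$ is a proper nonempty subset of $\Sigma_i$, so $\sigma_i$ takes both values $0$ and $1$. It remains to check the embedding identity $\sum_{\ell=1}^k \sigma_\ell(a_\ell)=0$ on every $(a_1,\dots,a_k)\in\supp(\mu)$: this reduces to $\sigma_i(a_i)+\sigma_j(a_j)=0$, i.e. $\one[a_i\in S_i] = \one[a_j\in S_j]$. But if $(a_1,\dots,a_k)\in\supp(\mu)$ then $(a_i,a_j)\in\supp(\mu_{i,j})$, so $a_i$ and $a_j$ are adjacent in $H$ and hence lie in the same connected component; thus $a_i\in S_i \iff a_j\in S_j$, which is exactly what we need. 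This exhibits an Abelian embedding of $\mu$, completing the contrapositive.

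I do not expect any serious obstacle here — the statement is essentially a definition-chasing exercise. The one point that requires the hypothesis that each atom has probability at least $\alpha$ (equivalently, that $\supp(\mu)$ is all we care about and every symbol is "live") is the claim that $H$ has no isolated vertices, so that disconnectedness genuinely produces a nontrivial bipartition of the alphabets rather than a vacuous split caused by an unused symbol; without this one could only conclude a nontrivial embedding after first discarding symbols outside $\supp(\mu_i)$. If one wants to be careful, one can alternatively restrict attention to $\supp(\mu_i)$ and $\supp(\mu_j)$ from the outset, which are automatically fully covered by $\supp(\mu_{i,j})$, and run the same argument there.
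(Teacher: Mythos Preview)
Your proof is correct and follows essentially the same approach as the paper's: both argue the contrapositive, take a connected component of the bipartite graph on coordinates $i,j$, and define $\sigma_i$ as the indicator of one side of the resulting partition and $\sigma_j$ as its negative, with all other $\sigma_\ell\equiv 0$. Your write-up is more detailed (in particular your discussion of isolated vertices and the role of the full-support assumption), but the construction and verification are the same.
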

\begin{proof}
Let $\mu_{ij}$ be the restriction of $\mu$ to the $i,j$ coordinates. If $(\Sigma_i\cup\Sigma_j, \supp(\mu_{ij}))$ is not connected, then there is a non-trivial partition $\A_i = \A_i' \cup \A_i''$ and $\A_j = \A_j' \cup \A_j''$ such that $\supp(\mu_{ij}) \subseteq (\A_i' \times \A_j') \cup (\A_i'' \times \A_j'')$. Now let $\sigma_t \equiv 0$ for $t \neq i, j$, $\sigma_i(x) = 1_{x \in \A_i'}$, and $\sigma_j(x) = -1_{x \in \A_j'}$. This is clearly an Abelian embedding, and at least one of $\sigma_i$ or $\sigma_j$ is nonconstant.
\end{proof}

\begin{obs}
\label{obs:sub}
If the support of a distribution $\mu$ on $\A_1 \times \dots \times \A_k$ has no Abelian embedding, then the restriction of $\mu$ to any $k-1$ coordinates also has no Abelian embedding.
\end{obs}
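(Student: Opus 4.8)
I would prove the contrapositive: if the restriction of $\mu$ to some $k-1$ coordinates admits an Abelian embedding, then so does $\mu$ itself. By relabeling coordinates it suffices to treat the marginal $\mu_{-k}$ on the first $k-1$ coordinates, since a restriction to $k-1$ coordinates just amounts to deleting one coordinate, and the argument below is insensitive to which one.

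So suppose $\mu_{-k}$ admits an Abelian embedding, witnessed by an Abelian group $G$ and maps $\sigma_i\colon \A_i\to G$ for $i=1,\dots,k-1$ that are not all constant, with $\sum_{i=1}^{k-1}\sigma_i(a_i)=0_G$ for every $(a_1,\dots,a_{k-1})\in\supp(\mu_{-k})$. I would then extend this to a candidate embedding of $\mu$ by appending $\sigma_k\equiv 0_G$. To check this works, the only point to make is that $\supp(\mu_{-k})$ is exactly the projection of $\supp(\mu)$ onto the first $k-1$ coordinates; hence for any $(a_1,\dots,a_k)\in\supp(\mu)$ the prefix $(a_1,\dots,a_{k-1})$ lies in $\supp(\mu_{-k})$, so $\sum_{i=1}^{k}\sigma_i(a_i)=\sum_{i=1}^{k-1}\sigma_i(a_i)+0_G=0_G$. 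Moreover $\sigma_1,\dots,\sigma_k$ are not all constant, since by hypothesis at least one of $\sigma_1,\dots,\sigma_{k-1}$ is nonconstant. Thus $\mu$ admits an Abelian embedding, which contradicts the assumption and completes the proof.

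There is essentially no obstacle here: the construction is the one-line "pad with the zero map," and the single fact being used is that passing to a marginal only projects (hence never enlarges) the support, so any linear relation holding on $\supp(\mu_{-k})$ automatically holds on $\supp(\mu)$. (Note that the converse implication is false in general — $\mu$ may admit an embedding that genuinely uses the $k$-th coordinate — so only this easy direction is available, and it is exactly the one needed to keep the hypothesis "no Abelian embedding" available when inducting on $k$.)
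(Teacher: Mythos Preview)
Your proof is correct and is essentially identical to the paper's: both argue by contrapositive, taking an Abelian embedding of the marginal $\mu_{-k}$ and extending it to one of $\mu$ by setting $\sigma_k\equiv 0_G$. You simply spell out the support-projection and non-constancy checks that the paper leaves implicit.
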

\begin{proof}
Let $\mu'$ be the restriction of $\mu$ to the coordinates $1, \dots, k-1$. If $\mu'$ has Abelian embeddings $\sigma_1', \dots, \sigma_{k-1}'$ then $\mu$ has the Abelian embeddings $\sigma_k \equiv 0$ and $\sigma_i \equiv \sigma_i'$ for $i = 1, \dots, k-1$.
\end{proof}

\subsection{Reducing to a $(k-1)$-ary Instance}
\label{subsec:reduce}



The goal of this subsection and of~\cref{subsec:apply3csp} is to show that~\cref{thm:kcsp} for $k-1$ implies \cref{lemma:kcsp} for $k$. The first step is to apply Cauchy-Schwarz to remove the $x_k$ variable, with the goal of reducing to a $(k-1)$-ary instance. To do this, we need to set up a bit of notation. Recall that $\mu_{-k}$ denoted the marginal distribution of $\mu$ on $\A_1 \times \dots \times \A_{k-1}$. 
Let $\mu_{-k,-k}$ be the following distribution on $(\A_1 \times \dots \times \A_{k-1}) \times (\A_1 \times \dots \times \A_{k-1})$:
\begin{itemize}
    \item Sample $x_k$ from $\mu_k$.
    \item Sample $(x_1, \dots, x_{k-1})$ and $(x_1', \dots, x_{k-1}')$ from $\mu$ conditioned on $x_k$.
    \item $\mu_{-k,-k}$ contains $((x_1, \dots, x_{k-1}), (x_1', \dots, x_{k-1}'))$.
\end{itemize}
Note that $\supp(\mu_{-k,-k})$ contains $((x_1,\dots,x_{k-1}),(x_1,\dots,x_{k-1}))$ for any $(x_1,\dots,x_{k-1}) \in \mu_{-k}$. In other words, $\mu_{-k,-k}$ essentially contains $\mu_{-k}$ as a submeasure.

Now, using the hypothesis of \cref{lemma:kcsp} gives us:
\begin{align*}
    \eps^2 &\le \left|\E_{(x_1,\dots,x_k) \sim \mu^{\otimes n}}\Big[\prod_{i=1}^k f_i(x_i)\Big]\right|^2 \le \E_{\substack{(x_1,\dots,x_{k-1}) \\ (x_1',\dots,x_{k-1}')} \sim \mu_{-k,-k}^{\otimes n}} \Big[\prod_{i=1}^{k-1} f_i(x_i)\bar{f_i(x_i')}\Big].
\end{align*}
By abuse of notation, identify between $\mu_{-k}$  and the measure on $(\A_1 \times \dots \times \A_{k-1}) \times (\A_1 \times \dots \times \A_{k-1})$ consisting of points 
\[ 
((x_1,\dots,x_{k-1}), (x_1, \dots, x_{k-1})), 
\] 
where $(x_1,\dots,x_{k-1})$ is distributed as in $\mu_{-k}$. 
We may write $\mu_{-k,-k} = \alpha^2 \mu_{-k} + (1-\alpha^2) \nu$
where $\nu$ is a distribution. Applying this random restriction to the above equation gives:
\begin{align} \eps^2 \le \E_{I \sim_{1-\alpha^2} [n]} \E_{\substack{(z_1,\dots,z_{k-1}) \\ (z_1', \dots, z_{k-1}')} \sim \nu^{\otimes I}} \E_{(x_1,\dots,x_{k-1}) \sim \mu_{-k}^{\otimes \bar{I}}} \Big[\prod_{i=1}^{k-1} (f_i)_{I \to z_i}(x_i) \bar{(f_i)_{I \to z_i'}(x_i)} \Big]. \label{eq:stuff1}
\end{align}
By the hypothesis, $\mu_{-k}$ has no Abelian embedding, and thus if
\begin{align} \left|\E_{(x_1,\dots,x_{k-1}) \sim \mu_{-k}^{\otimes \bar{I}}} \Big[\prod_{i=1}^{k-1} (f_i)_{I \to z_i}(x_i) \bar{(f_i)_{I \to z_i'}(x_i)} \Big] \right| \ge \eps^2/2, \label{eq:stuff2} 
\end{align}
by induction there is a constant $\delta_{k-1} = \delta_{k-1}(\eps^2/2)$ such that
\[ \stab_{1-\delta_{k-1}}\left((f_1)_{I\to z_1}\bar{(f_1)_{I \to z_1'}} \right) \ge \delta_{k-1}. \]
Combining this with \eqref{eq:stuff1} gives:
\begin{equation}
\E_{I \sim_{1-\alpha^2} [n]} \E_{\substack{(z_1,\dots,z_{k-1}) \\ (z_1', \dots, z_{k-1}')} \sim \nu^{\otimes I}}\left[\stab_{1-\delta_{k-1}}\left((f_1)_{I\to z_1}\bar{(f_1)_{I \to z_1'}} \right) \right] \ge \eps^2\delta_{k-1}/2. \label{eq:f1}
\end{equation}

\subsection{Applying the $3$-ary Inverse Theorem}
\label{subsec:apply3csp}
Let $\A := \A_1$. Now we will interpret \eqref{eq:f1} as a $3$-wise correlation over a pairwise-connected distribution, and then apply \cref{thm:3csp}. Let $\A^+ = (\A \times \A) \cup \{\star\}$, for some symbol $\star$.

Let $\xi$ be the following distribution over $\A \times \A \times \A^+$:
\begin{itemize}
    \item With probability $1-\alpha^2$, sample $(x_1,x_1') \sim \nu_1$, and include $(x_1,x_1',(x_1,x_1'))$ in $\xi$.
    \item Otherwise, let $x \sim \mu_1$. With probability $1-\delta_{k-1}$ include $(x,x,(x,x))$ in $\xi$, and with probability $\delta_{k-1}$ include $(x,x,\star)$ in $\xi$.
\end{itemize}
Finally, given $x^+ \in \A^+$, define the following way of sampling $(x,x') \in \A^n \times \A^n$, which we denote as $(x,x') \sim_{\star} x^+$. If $x^+_i \neq \star$, let $x_i, x_i'$ be the first, second coordinates of $x^+_i$ respectively. Otherwise if $x^+_i = \star$, sample $x \sim \mu_1$ and set $x_i = x_i' = x$. This lets us define the function $g: (\A^+)^n \to \bbC$ as
\[ g(x^+) := \E_{(x,x') \sim_{\star} x^+}\Big[f_1(x)\bar{f_1(x')} \Big]. \]
At this point, let us collect two useful observations:
\begin{obs}
\label{obs:pcxi}
The distribution $\xi$ is pairwise-connected.
\end{obs}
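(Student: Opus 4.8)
The plan is to verify that the bipartite graph of $\xi$ is connected for each pair of the three coordinates of $\A \times \A \times \A^+$. The key structural fact I would exploit is that $\xi$ was carefully built so that its "diagonal" part contains enough of the distribution $\mu_1$ and $\nu_1$ to propagate connectivity. Recall that $\mu_{-k}$ has no Abelian embedding, so by \cref{obs:pc} it is pairwise-connected; in particular its marginal $\mu_1$ is simply the uniform-ish measure on all of $\A_1$, and every atom of $\A_1$ appears in $\supp(\mu_1)$. This is the crucial point: the diagonal pairs $(x,x)$ for $x$ ranging over \emph{all} of $\A = \A_1$ appear in $\supp(\xi)$ (with the third coordinate being either $(x,x)$ or $\star$).

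First I would check the pair consisting of coordinates $1$ and $2$. Here the support of $\xi$ projected to the first two coordinates consists of $\{(x,x) : x \in \A\}$ (coming from the $\mu_1$-branch) together with $\{(x_1, x_1') : (x_1,x_1') \in \supp(\nu_1)\}$ (coming from the $\nu_1$-branch). The diagonal pairs already connect: since $\mu_{-k}$ is pairwise-connected, its first marginal is supported on all of $\A$, so every vertex $x$ on the left is matched to every vertex $x$ on the right through the diagonal, but more to the point, I need the \emph{bipartite graph on $\A \sqcup \A$} to be connected. The diagonal edges $\{x \sim x\}$ alone give a perfect matching, which is not connected; so I must use the $\nu$-edges. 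This is where the real content is: I would argue that $\xi_{1,2}$ restricted to diagonal edges plus $\nu_1$-edges is connected precisely because $\nu_1$ is a rescaling of $\mu_{-k,-k}$ minus the diagonal, and $\mu_{-k,-k}$ (being built by sampling $x_k$ and then two independent copies of $(x_1,\dots,x_{k-1})$ conditioned on $x_k$) has a support that, projected to the two copies of $\A_1$, is connected — this should follow from pairwise-connectedness of $\mu$ (which holds by \cref{obs:pc}, since $\mu$ has no Abelian embedding, or at least $\mu$ is pairwise-connected by hypothesis). I expect this to be the main obstacle: one has to unwind the definition of $\mu_{-k,-k}$ and check that "sample $x_k$, then two conditionally-independent copies of $x_1$" yields a connected bipartite graph on $\A_1 \sqcup \A_1$, which reduces to the fact that the bipartite graph $(\A_1 \sqcup \A_k, \supp(\mu_{1,k}))$ is connected.

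Next I would handle the pairs $\{1,3\}$ and $\{2,3\}$, which by the symmetry of $\xi$ in its first two coordinates are essentially the same check. The third coordinate takes values in $\A^+ = (\A\times\A)\cup\{\star\}$. For a value $(a,b) \in \A\times\A$ in the support of the third coordinate, the first coordinate is forced to equal $a$, so $(a,b)$ is adjacent (in the $\{1,3\}$ bipartite graph) only to $a \in \A$. The symbol $\star$, on the other hand, is adjacent to every $x \in \A$ for which $(x,x,\star) \in \supp(\xi)$, i.e. every $x \in \supp(\mu_1) = \A$. Hence $\star$ is a universal vertex on the $\A^+$ side connected to all of $\A$, and every vertex $(a,b)$ on the $\A^+$ side is connected to $a \in \A$, which is in turn connected to $\star$; likewise every $x \in \A$ is connected to $\star$. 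So the $\{1,3\}$ graph is connected, with $\star$ acting as a hub, and the $\{2,3\}$ case is identical. Combining the three pairs establishes that $\xi$ is pairwise-connected. The only genuinely nontrivial ingredient is the $\{1,2\}$ case, and there I would lean on the pairwise-connectedness of $\mu$ together with the explicit description of $\mu_{-k,-k}$ — concretely, tracing a path $x \leftrightarrow x'$ in $\A_1 \sqcup \A_1$ by first connecting $x$ to some $x_k$ via $\mu_{1,k}$, using connectivity there to reach the $x_k$ compatible with $x'$, and reading off the corresponding pair in $\mu_{-k,-k}$; plus the harmless fact that the $\delta_{k-1}$-fraction of the $\mu_1$-branch that gets label $\star$ does not disconnect anything since it only removes the third coordinate, not the first two.
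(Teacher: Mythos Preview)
Your proposal is correct and follows essentially the same approach as the paper: for the $\{1,3\}$ and $\{2,3\}$ pairs you use $\star$ as a hub (exactly as the paper does), and for the $\{1,2\}$ pair you reduce to the connectedness of the bipartite graph $(\A_1 \sqcup \A_k, \supp(\mu_{1,k}))$, which is precisely the paper's argument. Your write-up is more tentative and spends time on side remarks (e.g.\ that diagonal edges alone form a disconnected matching, or that $\mu_{-k}$ has no Abelian embedding), but the core reasoning and the one nontrivial ingredient---pairwise-connectedness of $\mu$ in coordinates $1,k$---match the paper exactly.
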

\begin{proof}
Indeed, the $13$-coordinates are pairwise-connected because every $x \in \A$ is connected to $\star \in \A^+$ because $(x,x,\star)$ is in $\supp(\xi)$ for all $x \in \A$. Similarly, the $23$-coordinates are pairwise-connected. For the $12$-coordinates, note that $(x,x')$ is in the support if there is some $x_k \in \A_k$ such that $(x, x_k), (x', x_k) \in \mu_{1k}$. Thus, the claim follows because the $1k$-coordinates were connected in $\mu$ by the hypothesis of \cref{lemma:kcsp}.
\end{proof}

\begin{obs}
\label{obs:3wisexi}
It holds that
\begin{align*}
    \E_{(x,x',x^+) \sim \xi^{\otimes n}}\Big[f_1(x)\bar{f_1(x')}\bar{g(x^+)} \Big] &= \E_{I \sim_{1-\alpha} [n]} \E_{\substack{(z_1,\dots,z_{k-1}) \\ (z_1', \dots, z_{k-1}')} \sim \nu^{\otimes I}}\left[\stab_{1-\delta_{k-1}}\left((f_1)_{I\to z_1}\bar{(f_1)_{I \to z_1'}} \right) \right] \\
    &\ge \eps^2\delta_{k-1}/2.
\end{align*}
\end{obs}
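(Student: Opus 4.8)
The plan is to directly unpack both sides and check they describe the same computation. The right-hand side involves a random restriction $I \sim_{1-\alpha^2}[n]$ (note: the displayed statement writes $\sim_{1-\alpha}$, but the construction of $\xi$ uses $1-\alpha^2$; I would reconcile this and use $1-\alpha^2$ throughout, matching \eqref{eq:f1}), then on coordinates in $I$ a sample $(z_i,z_i')\sim \nu$, and then the quantity $\stab_{1-\delta_{k-1}}\big((f_1)_{I\to z_1}\overline{(f_1)_{I\to z_1'}}\big)$, which by Definition \ref{def:stab} equals $\E_{x,w}\big[(f_1)_{I\to z_1}(x)\overline{(f_1)_{I\to z_1'}(x)}\cdot \overline{(f_1)_{I\to z_1}(w)}\,(f_1)_{I\to z_1'}(w)\big]$ where on each coordinate of $\bar I$, $w$ is a $(1-\delta_{k-1})$-noisy copy of $x$ with respect to $\mu_1$. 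First I would write out the left-hand side: sampling $(x,x',x^+)\sim\xi^{\otimes n}$ splits each coordinate into three cases — with probability $1-\alpha^2$ we get a ``$\nu$'' coordinate carrying $(x_1,x_1')$ with $x^+$ recording the pair; with probability $\alpha^2(1-\delta_{k-1})$ we get a ``diagonal, recorded'' coordinate $(x,x,(x,x))$; and with probability $\alpha^2\delta_{k-1}$ we get a ``diagonal, starred'' coordinate $(x,x,\star)$. Then $g(x^+)$ re-expands each $\star$ coordinate into a fresh independent diagonal draw from $\mu_1$ for a second pair $(y,y')$; on non-star coordinates it just copies the recorded pair.

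Next I would make the dictionary explicit. Let $I$ be the set of ``$\nu$'' coordinates; under $\xi^{\otimes n}$ this is exactly distributed as $I\sim_{1-\alpha^2}[n]$, and the recorded pairs on $I$ are exactly $(z_1,z_1')\sim\nu$ per coordinate (here only the first block $\nu_1$ of $\nu$ is seen by $f_1$; the other blocks $z_2,\dots$ are irrelevant to this particular expectation and can be marginalized out, which is why they appear on the right-hand side but play no role). On $\bar I$ we have a diagonal coordinate; the subset where $x^+=\star$ is an independent $\delta_{k-1}$-fraction, and on those coordinates $g$ resamples a fresh diagonal value — this is precisely the $(1-\delta_{k-1})$-correlated (w.r.t. $\mu_1$) coupling of the ``$x$'' variable and the ``$w$'' variable inside the $\stab$ operator. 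On the $(1-\delta_{k-1})$-fraction of $\bar I$ where $x^+\ne\star$, the recorded diagonal value forces $x=w$ on that coordinate, which is exactly what the noise operator does with probability $\rho=1-\delta_{k-1}$. Putting these together, $\E_{\xi^{\otimes n}}[f_1(x)\overline{f_1(x')}\overline{g(x^+)}]$ equals, after conditioning on $I$ and the recorded pairs on $I$, the expectation over $\bar I$-coordinates of $f_1$ evaluated at the four restricted inputs $(f_1)_{I\to z_1}(x)$, $\overline{(f_1)_{I\to z_1'}(x)}$, and the conjugate pair coming from $g$ with $x$ replaced by its noisy copy $w$ — i.e. exactly $\stab_{1-\delta_{k-1}}\big((f_1)_{I\to z_1}\overline{(f_1)_{I\to z_1'}}\big)$. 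The final inequality is then just \eqref{eq:f1}.

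I do not expect a serious obstacle here; this is a bookkeeping verification that the ad hoc distribution $\xi$ together with the definition of $g$ was rigged precisely to reproduce \eqref{eq:f1}. The one genuinely delicate point — and the part I would write most carefully — is matching the noise/correlation structure: one must check that ``star coordinates occur at density $\delta_{k-1}$ and get a fresh diagonal $\mu_1$-draw in $g$'' really implements the operator $\mathrm{T}_{1-\delta_{k-1}}$ with respect to $\mu_1$ as used in $\stab$, including that the second sample is drawn from $\mu_1$ (the marginal), not from something else, and that the recorded-diagonal coordinates correctly implement the ``stay'' event. A second minor check is that the extra coordinates $z_2,\dots,z_{k-1}$ and the $\nu$ versus $\nu_1$ distinction do not affect the value, since $g$ and $f_1$ only depend on the first block; this follows by linearity of expectation / marginalization. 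Modulo these routine checks, the identity holds term by term.
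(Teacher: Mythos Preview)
Your proposal is correct and follows exactly the same approach as the paper, which in fact proves the observation in one line (``The equality is by the definition of $\xi$ and $g$. The inequality is exactly \eqref{eq:f1}.''); your write-up is simply the careful unpacking of that one line. Your observation about the $\alpha$ versus $\alpha^2$ discrepancy is also correct: the displayed statement should read $I\sim_{1-\alpha^2}[n]$ to match the construction of $\xi$ and \eqref{eq:f1}.
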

\begin{proof}
    The equality is by the definition of $\xi$ and $g$. The inequality is exactly \eqref{eq:f1}.
\end{proof}
Given these, we can establish \cref{lemma:kcsp}  using~\cref{thm:3csp}.
\begin{proof}[Proof that~\cref{thm:kcsp} with $k-1$ implies~\cref{lemma:kcsp} with $k$]
Note that all atoms in $\xi$ have mass at least $\alpha^2\delta_{k-1}$, and that $\xi$ is pairwise-connected by \cref{obs:pcxi}. Thus,
we may write $\xi = \zeta \xi' + (1-\zeta)\xi''$
where $\xi',\xi''$ are distributions, $\zeta\geq \alpha^2 \delta_{k-1}$, the probability 
of each atom in $\xi'$ is at least 
$\alpha^2$ and $\xi'$ has the same support as $\xi$. Applying 
random restrictions, we get that with probability at least $\eps^2\delta_{k-1}/4$, choosing $J\sim_{1-\zeta} I$ and $(a,b,c)\sim {\xi''}^{J}$ we have that 
\[
\left| 
\E_{(x,x',x^+) \sim {\xi'}^{\overline{J}}}\Big[
(f_1)_{J\rightarrow a} (x)
\bar{(f_1)_{J\rightarrow b}(x')}
\bar{(g)_{J\rightarrow c}(x^+)} \Big]
\right|\geq \eps^2\delta_{k-1}/4.
\]

For each such $J$ and $(a,b,c)$,~\cref{thm:3csp} gives that:
\[ 
\Pr_{I \sim_{1-\delta} J, z \sim \nu_1^I} \left[\exists \{P_i: \A_1 \to \bbC, \|P_i\|_\infty \le 1\}_{i \in \bar{I}} \enspace \text{ with } \enspace \Big|\E_{x \sim \A_1^{\bar{I}}}\Big[(f_1)_{J\rightarrow a, I\to z}(x) \prod_{i \in \bar{I}} P_i(x_i) \Big]\Big| \ge \delta \right] \ge \delta, 
\]
where $\delta \ge \exp(-(\eps\delta_{k-1})^{-O_{\alpha}(1)})$. 
Combining the two random restrictions, we see that
we may write $\mu_1 = (1-\delta\zeta)\nu + \delta\zeta U$ and get that
\[ 
\Pr_{I \sim_{1-\delta\zeta} [n], z \sim \nu_1^I} \left[\exists \{P_i: \A_1 \to \bbC, \|P_i\|_\infty \le 1\}_{i \in \bar{I}} \enspace \text{ with } \enspace \Big|\E_{x \sim \A_1^{\bar{I}}}\Big[(f_1)_{I\to z}(x) \prod_{i \in \bar{I}} P_i(x_i) \Big]\Big| \ge \delta \right] \ge \delta, 
\] 
By applying a further random restriction we can convert $\nu$ and $U$ back into $\mu_1$ at the cost of decreasing $\delta$ by a constant. This finishes the inductive step, establishing that
\begin{equation}\label{eq:recursion1}
\tau_{k}(\eps)
\geq 
\Omega(\eps^2\delta_{k-1}(\eps^2/2))
\cdot
\exp(-(\eps\delta_{k-1}(\eps^2/2))^{-O_{\alpha}(1)})
\geq 
\exp(-\delta_{k-1}(\eps^2/2)^{-O_{\alpha}(1)}).
\end{equation}
In~\cref{subsec:conclude} we will establish a recursive relation for $\delta_{k-1}$, and then in~\cref{sec:solve_rec} we conclude a quantitative estimate for $\tau_k$.
\end{proof}

We get~\cref{lemma:kcsp_conn} using a similar argument.
\begin{proof}[Proof of \cref{lemma:kcsp_conn}]
The proof of this lemma is along the lines of the proof of the Lemma~\ref{lemma:kcsp}. The only change is that, instead of relying on the inductive hypothesis to go from \eqref{eq:stuff1} to \eqref{eq:f1}, we use Lemma~\ref{lemma:mossel} where $\delta_{k-1} \ge  \eps^{O_{\alpha}(1)}$. The conclusion then follows from Observation~\ref{obs:3wisexi} and applying Theorem~\ref{thm:3csp}.
\end{proof}

\subsection{Proof of~\cref{thm:kcsp}}
\label{subsec:conclude}
Fix functions $f_1,\ldots,f_k$ as in the statement of~\cref{thm:kcsp}.
The goal of this section is to replace each $f_i$ by a product function (potentially after random restriction) by applying \cref{lemma:kcsp}. Then, by using the fact that $\mu$ has no Abelian embedding, we will conclude that the product functions must be nearly constant, which implies that $f_i$ correlates to a constant after random restriction. This is essentially what \cref{thm:kcsp} claims. 

More precisely, the following lemma allows us to replace $f_k$ by a product function:
\begin{lemma}
\label{lemma:replace}
Let $k$ be a positive integer and let $\mu$ be a distribution over $\A_1 \times \dots \times \A_k$ 
that has no Abelian embeddings and in which the probability of each atom is at least $\alpha$. 
Then, for all $\eps > 0$, if $1$-bounded functions functions $f_i: \A_i^n \to \bbC$ satisfy
\[ \left|\E_{(x_1,\dots,x_k) \sim \mu^{\otimes n}}\Big[\prod_{i=1}^k f_i(x_i)\Big]\right| \ge \eps, \]
then with probability at least $\tau_k/2 = \tau_k(\alpha,\eps^2)/2$ over $I \sim_{1-\tau_k} [n]$ and $(z_1, \dots, z_k) \sim \mu^I$, there is a $1$-bounded product function $P_k: \A_k^{\bar{I}} \to \bbC$ satisfying
\[ \left|\E_{x \sim \mu^{\otimes \bar{I}}}\Big[\prod_{i=1}^{k-1} (f_i)_{I \to z_i}(x_i) P_k(x_k) \Big] \right| \ge 
\frac{\tau_k}{2}.
 \]
\end{lemma}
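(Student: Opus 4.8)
The plan is to reduce to \cref{lemma:kcsp}. The idea is to use Cauchy--Schwarz to absorb $f_1,\dots,f_{k-1}$ into a single $1$-bounded function on the $k$-th coordinate, apply the inverse theorem to \emph{that} function rather than to $f_k$ directly, and then transfer the product structure it outputs back to the original correlation.

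Concretely, I would define $H\colon\A_k^n\to\bbC$ by $H(u):=\E_{(x_1,\dots,x_{k-1})\sim\mu^{\otimes n}\mid x_k=u}\big[\prod_{i=1}^{k-1}f_i(x_i)\big]$, which is $1$-bounded. The hypothesis reads $\eps\le\big|\langle f_k,\bar H\rangle_{\mu_k^{\otimes n}}\big|\le\|f_k\|_2\|H\|_2\le\|H\|_2$, so $\|H\|_2\ge\eps$ and hence $\big|\E_{\mu^{\otimes n}}\big[\prod_{i=1}^{k-1}f_i(x_i)\,\bar H(x_k)\big]\big|=\|H\|_2^2\ge\eps^2$. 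Now I apply \cref{lemma:kcsp} to the $k$-tuple $(f_1,\dots,f_{k-1},\bar H)$, after relabeling coordinates so that the $k$-th coordinate (the one carrying $\bar H$) plays the role of the first. This is legitimate: $\mu$ is pairwise-connected by \cref{obs:pc}, and its marginal on the $k-1$ coordinates $\{1,\dots,k-2,k\}$ admits no Abelian embedding by \cref{obs:sub}. With $\tau_k:=\tau_k(\alpha,\eps^2)$, the conclusion is that with probability at least $\tau_k$ over $I\sim_{1-\tau_k}[n]$ and $w\sim\mu_k^I$ there is a $1$-bounded product function $Q\colon\A_k^{\bar I}\to\bbC$ with $\big|\E_{y\sim\mu_k^{\bar I}}\big[\overline{(H)_{I\to w}(y)}\,Q(y)\big]\big|\ge\tau_k$, using $(\bar H)_{I\to w}=\overline{(H)_{I\to w}}$.

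The crux is to turn this into a statement about the genuine restrictions $(z_1,\dots,z_k)\sim\mu^I$ appearing in the lemma. For a restriction $\rho=(I,z_1,\dots,z_k)$ I would set $H_\rho(y):=\E_{x\sim\mu^{\otimes\bar I},\,x_k=y}\big[\prod_{i=1}^{k-1}(f_i)_{I\to z_i}(x_i)\big]$, so that $\big|\E_{y\sim\mu_k^{\bar I}}[H_\rho(y)P(y)]\big|$ is exactly the quantity $\big|\E_{x\sim\mu^{\otimes\bar I}}\big[\prod_{i=1}^{k-1}(f_i)_{I\to z_i}(x_i)P(x_k)\big]\big|$ that the lemma asks us to lower bound. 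Since conditioning $\mu^{\otimes n}$ on the value of $x_k$ factorizes over coordinates, one checks the identity $(H)_{I\to w}=\E_{(z_1,\dots,z_{k-1})}\big[H_{(I,(z_1,\dots,z_{k-1},w))}\big]$, the $z_i$ being sampled from $\mu^I$ conditioned on $x_k=w$. Combining this with the previous paragraph and the triangle inequality, for each ``good'' pair $(I,w)$ the single product function $P_k:=\bar Q$ satisfies $\E_{(z_1,\dots,z_{k-1})\mid w}\big[\,\big|\E_{x\sim\mu^{\otimes\bar I}}[\prod_{i<k}(f_i)_{I\to z_i}(x_i)P_k(x_k)]\big|\,\big]\ge\tau_k$. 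As sampling $(I,z_1,\dots,z_k)\sim(1-\tau_k,\mu^I)$ is the same as sampling $(I,w)\sim(1-\tau_k,\mu_k^I)$ and then $(z_1,\dots,z_{k-1})$ conditioned on $x_k=w$, averaging gives $\E_\rho\big[\max_{P}\big|\E_x[\cdot]\big|\big]\ge\tau_k^2$, so reverse Markov yields the conclusion (with a product function and correlation at least $\tau_k^2/2$) for at least a $\tau_k^2/2$ fraction of restrictions. This is the claimed statement up to replacing $\tau_k/2$ by $\tau_k^2/2$, which is irrelevant for the quantitative bounds since $\tau_k$ is a tower of exponentials in $\eps^{-1}$; the precise form follows by one extra random restriction, which only increases the correlation in expectation (again by the coordinate-wise factorization of $\mu^{\otimes n}$).

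I expect the last step to be the main obstacle: \cref{lemma:kcsp} only produces a product function correlated with the \emph{averaged} function $(H)_{I\to w}$, and converting this into a useful product function for a good fraction of the actual restriction values $(z_1,\dots,z_{k-1})$ is what requires the factorization identity together with the first-moment argument. It is also worth flagging why one must feed $\bar H$ (and not $f_k$) into \cref{lemma:kcsp}: applying it to $f_k$ would only give a product function correlated with $f_k$ over the marginal $\mu_k$, which by itself does not allow one to \emph{substitute} for $f_k$ inside the $k$-wise correlation.
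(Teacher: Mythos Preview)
Your proposal is correct and follows essentially the same route as the paper. The paper defines $\wt f_k(x):=\E[\prod_{i<k}f_i(x_i)\mid x_k=x]$ (your $H$), uses Cauchy--Schwarz to get $|\E[\prod_{i<k}f_i(x_i)\,\bar{\wt f_k}(x_k)]|\ge\eps^2$, applies \cref{lemma:kcsp} (invoking \cref{obs:pc} and \cref{obs:sub} exactly as you do) to obtain a product function $P_k$ correlated with $(\wt f_k)_{I\to z}$, and then uses the identity $(\wt f_k)_{I\to z}=\E_{(z_1,\dots,z_{k-1})\mid z_k=z}[H_\rho]$ together with an averaging argument --- precisely your factorization and first-moment step. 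Your observation that the averaging naturally yields probability and correlation $\ge\tau_k^2/2$ rather than $\tau_k/2$ is accurate; the paper's displayed constant is slightly optimistic but, as you note, immaterial for the tower-type bounds. The remark about ``one extra random restriction'' at the end is unnecessary and does not recover the sharper constant, so you can drop it.
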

\begin{proof}
Define the function $\wt{f}_k: \A_k^n \to \bbC$ as
\[ \wt{f}_k(x) := \E_{(x_1,\dots,x_k) \sim \mu^{\otimes n}} \Big[\prod_{i=1}^{k-1} f_i(x_i) ~\Big |~ x_k = x\Big]. \]
Note that $\wt{f}_k$ is $1$-bounded, and by the Cauchy-Schwarz inequality
\begin{align*}
    \eps^2 &\le \left|\E_{(x_1,\dots,x_k) \sim \mu^{\otimes n}}\Big[\prod_{i=1}^k f_i(x_i)\Big]\right|^2 = \left|\E_{x \sim \mu_k^{\otimes n}}\Big[\wt{f}_k(x)f_k(x)\Big]\right|^2 \le \|\wt{f}_k\|_2^2\|f_k\|_2^2 \\
    &\le \|\wt{f}_k\|_2^2 = \left|\E_{(x_1,\dots,x_k) \sim \mu^{\otimes n}}\Big[\prod_{i=1}^{k-1} f_i(x_i) \bar{\wt{f}_k(x_k)}\Big]\right|.
\end{align*}
Note that by \cref{obs:pc} and \cref{obs:sub}, the hypotheses of \cref{lemma:kcsp} hold. Thus, \cref{lemma:kcsp} gives that for some $\tau_k := \tau_k(\alpha,\eps^2)$, with probability at least $\tau_k$ over $I \sim_{1-\tau_k} [n]$ and $z \sim \mu_k^{\otimes I}$ that there is a $1$-bounded product function $P_k: \A_k^{\bar{I}} \to \bbC$ with
\[ \tau_k \le \left|\E_{x \sim \mu_k^{\otimes \bar{I}}}\Big[\bar{(\wt{f}_k)_{I \to z}(x)}P_k(x)\Big] \right| = \E_{(z_1,\dots,z_k) \sim \mu^{\otimes I} : z_k = z} \E_{(x_1,\dots,x_k) \sim \mu^{\otimes \bar{I}}} \Big[\prod_{i=1}^{k-1} (f_i)_{I \to z_i}(x_i) \bar{P_k(x_k)} \Big].  \]
This implies the desired conclusion by an averaging argument.
\end{proof}
Applying \cref{lemma:replace} a total of $k-1$ times gives the following.
\begin{corollary}
\label{cor:replace}
Let $k$ be a positive integer and let $\mu$ be a distribution over $\A_1 \times \dots \times \A_k$ 
that has no Abelian embeddings and in which the probability of each atom is at least $\alpha$. 
Then, for all $\eps > 0$ there is a constant $\delta := \delta(\alpha,\eps)$ such that for all sufficiently large $n$, if $1$-bounded functions functions $f_i: \A_i^n \to \bbC$ satisfy
\[ \left|\E_{(x_1,\dots,x_k) \sim \mu^{\otimes n}}\Big[\prod_{i=1}^k f_i(x_i)\Big]\right| \ge \eps, \]
then with probability at least $\delta$ over $I \sim_{1-\delta} [n]$ and $z \sim \mu_1^I$, there are $1$-bounded product functions $P_i: \A_i^{\bar{I}} \to \bbC$ for $i = 2, \dots, k$ such that:
\[ \left|\E_{x \sim \mu_1^{\otimes n}} \Big[(f_1)_{I \to z}(x) \prod_{i=2}^k P_i(x_i)\Big] \right| \ge \delta. \]
Quantitatively, we can take 
$\delta := \tau_k(\dots(\tau_k(\eps^{2}))^2)$ where the number of applications of $\tau_k$ is $k-1$.
\end{corollary}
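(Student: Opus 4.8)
The plan is to run \cref{lemma:replace} a total of $k-1$ times, peeling off the functions $f_k, f_{k-1},\dots,f_2$ one at a time and turning each into a $1$-bounded product function, while only ever applying further random restrictions to $f_1$ (and to the product functions already produced). Concretely, I will prove by downward induction on $j\in\{k,k-1,\dots,1\}$ the statement $(\star_j)$: with probability at least $\delta_j$ over a random restriction that retains each coordinate independently with probability $\delta_j$, there exist $1$-bounded product functions $P_{j+1},\dots,P_k$ on the retained coordinates $\bar{I}$ such that $\big|\E_{x\sim\mu^{\otimes\bar{I}}}\big[\prod_{i=1}^{j}(f_i)_{I\to z_i}(x_i)\prod_{i=j+1}^{k}P_i(x_i)\big]\big|\ge\delta_j$. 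Here $\delta_k:=\eps$, so that $(\star_k)$ is exactly the hypothesis (trivial restriction, no product functions), while $(\star_1)$ is precisely the conclusion of \cref{cor:replace}.

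For the inductive step $(\star_j)\Rightarrow(\star_{j-1})$, condition on the good event of $(\star_j)$. We then have $1$-bounded functions $g_1,\dots,g_{j-1}$ (restrictions of $f_1,\dots,f_{j-1}$) together with $1$-bounded product functions $P_{j+1},\dots,P_k$, forming a $k$-wise correlation of magnitude at least $\delta_j$ over $\mu^{\otimes\bar{I}}$. Since both ``admits no Abelian embedding'' and ``each atom has mass at least $\alpha$'' are invariant under permuting the $k$ coordinates, I may apply \cref{lemma:replace} to the permuted tuple in which $g_j$ occupies the last slot and $g_1,\dots,g_{j-1},P_{j+1},\dots,P_k$ occupy the first $k-1$ slots; this yields a further random restriction and a $1$-bounded product function $P_j$ (on the alphabet of $f_j$) replacing $g_j$, with the new correlation magnitude at least $\tau_k(\alpha,\delta_j^2)/2$. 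The one bookkeeping point is that the new restriction is applied also to the previously built product functions $P_{j+1},\dots,P_k$: but if $P(x)=\prod_i P_i(x_i)$ with each $\|P_i\|_\infty\le 1$, then $P_{I'\to z}=c\cdot\widetilde{P}$ where $\widetilde{P}$ is again a $1$-bounded product function and $|c|=\big|\prod_{i\in I'}P_i(z_i)\big|\le 1$, so dividing the correlation through by $c$ only increases its magnitude and $\widetilde{P}$ may be used in place of $P_{I'\to z}$. Setting $\delta_{j-1}:=\tau_k(\alpha,\delta_j^2)/2$ and composing the two restrictions establishes $(\star_{j-1})$.

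Unwinding the recursion $\delta_{j-1}=\tau_k(\alpha,\delta_j^2)/2$ from $\delta_k=\eps$ down to $\delta_1$ gives $\delta_1=\tau_k(\dots(\tau_k(\eps^2))^2\dots)$ with $k-1$ nested applications, up to the harmless factors of $2$ which are absorbed into the asymptotic form. It remains to track the probabilities across the $k-1$ rounds: the composed restriction retains each coordinate with probability $\prod_{i=1}^{k-1}\tau_k(\alpha,\delta_{i+1}^2)=\prod_{i=1}^{k-1}2\delta_i\ge\delta_1^{O(k)}$, the $k-1$ invocations of \cref{lemma:replace} jointly succeed with probability $\prod_{i=1}^{k-1}\delta_i\ge\delta_1^{O(k)}$, and the final correlation is at least $\delta_1$; hence the corollary holds with $\delta:=\delta_1^{O(k)}$, which has the same iterated-exponential shape as $\delta_1$ (as $x\mapsto x^{O(k)}$ preserves that shape), matching the stated bound after absorbing polynomial and constant losses into $O_{\alpha}(1)$.

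I do not expect a genuine obstacle: all of the analytic content lives inside \cref{lemma:replace} (hence in \cref{lemma:kcsp}, \cref{thm:3csp}, and the inductive \cref{thm:kcsp}). The only things needing care are of a bookkeeping nature --- the permutation-invariance of the hypotheses (so that \cref{lemma:replace} can be used to replace any $f_i$ with $i\ge 2$ rather than only $f_k$), the fact that ``$1$-bounded product function'' is closed under further random restriction up to a modulus-$\le 1$ scalar, and the accounting of the success probability and the retained-coordinate fraction through the $k-1$ rounds of nested restrictions.
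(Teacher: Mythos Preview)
Your proposal is correct and follows exactly the approach indicated in the paper, which simply says ``Applying \cref{lemma:replace} a total of $k-1$ times gives the following.'' You have filled in the bookkeeping details (permutation invariance, closure of $1$-bounded product functions under further restriction, composition of the restriction parameters) that the paper leaves implicit; one minor simplification is that a restriction of a $1$-bounded product function is itself a $1$-bounded product function outright (absorb the constant $\prod_{i\in I'}P_i(z_i)$ into any single surviving factor), so the scalar-$c$ maneuver is not strictly needed.
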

We want to use this to argue that $(f_1)_{I \to z}$ has large noise stability. Towards this, we need a somewhat technical statement about when product functions can correlate under distributions with no Abelian embedding. We start with the one-dimensional case.
\begin{lemma}
\label{lemma:prod1}
Let $k$ be a positive integer and let $\mu$ be a distribution over $\A_1 \times \dots \times \A_k$ 
that has no Abelian embeddings and in which the probability of each atom is at least $\alpha$.
For all $\eta > 0$ and $1$-bounded functions $P_i: \A_i \to \bbC$ satisfying
\[ \left|\E_{(x_1,\dots,x_k) \sim \mu} \Big[\prod_{i=1}^k P_i(x_i) \Big] \right| \ge 1-\eta, \]
it holds that $|P_1(x)| \ge 1-O_{\alpha}(\eta)$ and $|P_1(x) - P_1(y)| \le O_{\alpha}(\sqrt{\eta})$ for all $x, y \in \A_1$.
\end{lemma}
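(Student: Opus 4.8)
The plan is to first show that the hypothesis forces every atomic summand $\prod_{i}P_i(a_i)$ to be essentially equal to $1$, and then to encode the ``no Abelian embedding'' assumption as an integer linear relation among the atoms which, after exponentiation, pins down $P_1$ up to a $O_\alpha(\sqrt\eta)$ error.

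\smallskip
\emph{Step 1 (every atom is nearly saturated).} Replacing $P_1$ by $e^{-i\theta}P_1$ for a suitable phase $\theta$ (which changes neither $|P_1(\cdot)|$ nor $|P_1(x)-P_1(y)|$), we may assume $c:=\E_{x\sim\mu}[\prod_i P_i(x_i)]$ is real and $\ge 1-\eta$. Writing $z_a:=\prod_{i=1}^k P_i(a_i)$ for $a\in\supp(\mu)$, we have $|z_a|\le 1$ and $1-\eta\le c=\sum_a\mu(a)\Re(z_a)$, so $\sum_a\mu(a)(1-\Re z_a)\le\eta$ with every summand nonnegative. Since $\mu(a)\ge\alpha$, this gives $1-\Re z_a\le\eta/\alpha$ for each atom, hence $|z_a|\ge\Re z_a\ge 1-\eta/\alpha$ and $|z_a-1|^2=|z_a|^2-2\Re z_a+1\le 2\eta/\alpha$. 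As $|P_i(a_i)|\le 1$ for all $i$, for every $x\in\A_1$ (which occurs as the first coordinate of some atom) we get $|P_1(x)|\ge|z_a|\ge 1-\eta/\alpha=1-O_\alpha(\eta)$, the first claim; and $|z_a-1|\le\sqrt{2\eta/\alpha}$ for every atom.

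\smallskip
\emph{Step 2 (encoding the absence of an Abelian embedding).} Let $F$ be the free abelian group on the set $\bigsqcup_{i=1}^{k}\A_i$, with basis vectors $e_z$ for $z\in\A_i$, $1\le i\le k$; for an atom $a=(a_1,\dots,a_k)$ put $r_a:=\sum_{i=1}^k e_{a_i}\in F$, and let $R\le F$ be the subgroup they generate. I claim that, since $\mu$ has no Abelian embedding, $e_x-e_y\in R$ for all $x,y\in\A_1$. Indeed, if $e_x-e_y\notin R$ for some $x\neq y$, then the quotient map $\psi\colon F\to F/R$ together with $\sigma_i(z):=\psi(e_z)$ for $z\in\A_i$ would be an Abelian embedding of $\mu$ into $F/R$: $\sum_i\sigma_i(a_i)=\psi(r_a)=0$ for every atom, and $\sigma_1(x)-\sigma_1(y)=\psi(e_x-e_y)\neq 0$ shows $\sigma_1$ is nonconstant — contradicting the hypothesis.

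\smallskip
\emph{Step 3 (exponentiating the relation).} Fix $x,y\in\A_1$ and write $e_x-e_y=\sum_a c_a r_a$ with $c_a\in\Z$; moreover, since the combinatorial type of $\mu$ is determined by its at most $1/\alpha$ atoms over alphabets of size at most $1/\alpha$, a standard bound on integer solutions of bounded linear systems (e.g.\ via Smith normal form, after restricting to a maximal nonsingular submatrix of the matrix with columns $r_a$) lets us take $\sum_a|c_a|\le C_\alpha$ for some $C_\alpha=O_\alpha(1)$. Comparing the coefficient of each basis vector $e_z$ ($z\in\A_j$) on the two sides, $\sum_{a:\,a_j=z}c_a$ equals $1$ if $(j,z)=(1,x)$, equals $-1$ if $(j,z)=(1,y)$, and equals $0$ otherwise, so
\[
\prod_a z_a^{c_a}=\prod_a\prod_{j=1}^k P_j(a_j)^{c_a}=\prod_{j=1}^k\prod_{z\in\A_j}P_j(z)^{\sum_{a:\,a_j=z}c_a}=\frac{P_1(x)}{P_1(y)}.
\]
(If $\eta\ge\alpha/(4C_\alpha)$ the target bound $|P_1(x)-P_1(y)|\le 2$ is trivial with a large enough implied constant, so assume otherwise; then $|P_1(y)|\ge 1-\eta/\alpha>0$ by Step 1.) Since each $|z_a-1|\le\sqrt{2\eta/\alpha}$ and $1-\eta/\alpha\le|z_a|\le 1$, the standard estimate for a product of $m$ near-$1$ complex numbers — treating the negative exponents separately, each $z_a^{-|c_a|}$ having modulus at most $(1-\eta/\alpha)^{-C_\alpha}\le e^{O(1)}$ under our smallness assumption — gives $\big|\prod_a z_a^{c_a}-1\big|\le O(\sum_a|c_a|)\cdot\sqrt{2\eta/\alpha}=O_\alpha(\sqrt\eta)$. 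Combined with the display and $|P_1(y)|\le 1$, this yields $|P_1(x)-P_1(y)|=|P_1(y)|\cdot\big|\tfrac{P_1(x)}{P_1(y)}-1\big|\le O_\alpha(\sqrt\eta)$, completing the proof.

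\smallskip
\emph{Main obstacle.} The substantive steps are Step 2 and the coefficient bound in Step 3: one must convert the purely existential ``no Abelian embedding'' hypothesis into a quantitative integer identity whose coefficients are controlled by $\alpha$ alone. The product estimate in Step 3 is routine but relies on the mild quantitative control from Step 1 (namely that every $|z_a|$ is bounded away from $0$) so that the negative powers do not blow up.
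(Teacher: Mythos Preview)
Your proof is correct and shares the same core insight as the paper's—that the ``no Abelian embedding'' hypothesis is equivalent to a full-rank lattice condition on the atom vectors—but the execution is genuinely different. The paper separates each $P_j(x_j)$ into modulus and phase, normalizes the phases $\sigma_j$ so that $\sigma_j(x_j^*)=0$ at a fixed base atom, and then argues on the \emph{dual} side: it builds an $S\times S$ integer matrix $A$ (rows coming from a spanning set of the lattice $L=\mathsf{span}_{\Z}(a_x)$), shows that no Abelian embedding forces $A^{-1}$ to be integral, and uses $\|A\sigma\|_{\R/\Z}\le O(\sqrt\eta)$ to conclude $\|\sigma\|_{\R/\Z}\le O_\alpha(\sqrt\eta)$. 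You instead stay on the \emph{primal} side: you never decompose into modulus and phase, work directly with the multiplicative quantities $z_a=\prod_i P_i(a_i)$, deduce $e_x-e_y\in R$ from the quotient map $F\to F/R$, and then exponentiate the integer relation $e_x-e_y=\sum_a c_a r_a$ to obtain the clean identity $P_1(x)/P_1(y)=\prod_a z_a^{c_a}$.

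Your route is arguably more streamlined—no base-point normalization, no $\R/\Z$ bookkeeping—while the paper's makes the link to the $\R/\Z$-valued definition of Abelian embedding more explicit. One small point worth making explicit in your Step~3 (you gesture at it with ``at most $1/\alpha$ atoms over alphabets of size at most $1/\alpha$''): the effective dimension is bounded purely in terms of $\alpha$, because after discarding coordinates with $|\A_i|=1$ one has $k\le\sum_i(|\A_i|-1)\le 1/\alpha$ (the $r_a$'s must span once no embedding exists), so $\sum_i|\A_i|\le 2/\alpha$; this is what makes the Smith-normal-form bound $\sum_a|c_a|\le C_\alpha$ independent of $k$.
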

\begin{proof}
We think of $\alpha$ as a constant throughout.
We start by reinterpreting what it means for $\mu$ to have no Abelian embedding linear-algebraically. If $\mu$ has no Abelian embedding, then there are no functions $\a_i: \A_i \to \R$ such that $\sum_{i=1}^k \a_i(x_i) \in \Z$ for all $(x_1, \dots, x_k) \in \supp(\mu)$, and $\pi \circ \a_i$ are nonconstant, where $\pi: \R \to \R/\Z$ is the projection map modulo $1$.
Let $(x_1^*, \dots, x_k^*) \in \supp(\mu)$ be an arbitrary point. We may assume that $\a_i(x_i^*) = 0$ for all $i = 1, \dots, k$ by shifting. Let $S = \sum_{i=1}^k |\A_i|-1$, and denote $\a \in \R^S$ as the vector with entries $\a_i(x_i)$ for $i = 1, \dots, k$ and $x_i \neq x_i^*$. For a point $x = (x_1, \dots, x_k) \in \supp(\mu)$, let $a_x \in \Z^S$ be the vector such that $\langle a_x, \a \rangle = \sum_{i=1}^k \a_i(x_i)$. Then, rephrasing the above condition, we get that $\mu$ has no Abelian embedding if and only if $\langle a_x, \a \rangle \in \Z$ for all $x \in \supp(\mu)$ implies that $\a \in \Z^S$. Consider the lattice $L = \mathsf{span}_{\Z}(a_x)$ and let $a^{(1)}, \dots, a^{(S)} \in L$ be a spanning set for $L$. Let $A \in \Z^{S \times S}$ be the matrix with rows $a^{(1)}, \dots, a^{(S)}$. Then, we know that $\mu$ has no Abelian embedding if and only if $A\a \in \Z^S$ implies that $\a \in \Z^S$. Thus, $A^{-1}$ must have integer entries.

Now, we proceed to the main argument. We may assume that $\eta$ is sufficiently small, because obviously $|P_1(x) - P_1(y)| \le 2$ for all $x, y \in \A_1$. We may assume that
\[ \E_{(x_1,\dots,x_k) \sim \mu} \Big[\prod_{i=1}^k P_i(x_i) \Big] \]
is a real number, and thus the real part of $\prod_{i=1}^k P_i(x_i)$ is at least $1 - O(\eta)$ for all $(x_1, \dots, x_k) \in \supp(\mu)$.
Because $P_j$ are $1$-bounded, this implies that $|P_j(x_j)| \ge 1-O(\eta)$ for all $x_j \in \A_j$. If $\a_j(x_j)$ are such that $P_j(x_j)/|P_i(x_j)| = \exp(2\pi i \a_j(x_j))$, we further know that
\[ \Big\|\sum_{j=1}^k \a_j(x_j) \Big\|_{\R/\Z} \le O(\sqrt{\eta}) \]
for all $(x_1, \dots, x_k) \in \supp(\mu)$. For $a_x$ defined as above, there are integers $k_x$ such that $|\langle a_x, \a \rangle - k_x| \le O(\sqrt{\eta})$ for $x \in \supp(\mu)$. Because $a^{(1)}, \dots, a^{(S)}$ are finite integer combinations of $a_x$, we know that there are integers $k^{(i)}$ such that $|\langle a^{(i)}, \a \rangle - k^{(i)}| \le O(\sqrt{\eta})$ for $i = 1, \dots, S$. Because $A^{-1}$ has integral entries, we conclude that $\|\a\|_{\R/\Z} \le O(\sqrt{\eta})$.

This implies the desired conclusion, because
\begin{align*} |P_1(x) - P_1(y)| &\le O(\eta) + |P_1(x)/|P_1(x)| - P_1(y)/|P_1(y)|| \\ &= O(\eta) + |\exp(2\pi i \a_1(x)) - \exp(2\pi i \a_1(y))| \le O(\sqrt{\eta}),
\end{align*}
because $\|\a_1(x) - \a_1(y)\|_{\R/\Z} \le O(\sqrt{\eta})$.
\end{proof}
Now we state the $n$-dimensional case that we need.
\begin{lemma}
\label{lemma:prodn}
Let $k$ be a positive integer and $\mu$ be a distribution over $\A_1 \times \dots \times \A_k$ whose support has no Abelian embedding and in which the probability of each atom is at least $\alpha$. If $1$-bounded product functions $P_i: \A_i^n \to \bbC$ satisfy
\[ \left|\E_{(x_1,\dots,x_k) \sim \mu^{\otimes n}} \Big[\prod_{i=1}^k P_i(x_i) \Big] \right| \ge \delta, \]
then for any $\gamma \in (0, 1)$ it holds that
\[ \E_{\substack{x \sim \mu_1^{\otimes n} \\ y \sim_{1-\gamma} x}}[|P_1(x) - P_1(y)|^2] \le O_{\alpha}(\gamma\log(1/\delta)). \]
\end{lemma}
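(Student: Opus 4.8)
The plan is to exploit that all the $P_i$ are product functions, so the correlation on the left-hand side factors over the $n$ coordinates, and then feed \cref{lemma:prod1} into each coordinate separately. Write $P_i(x)=\prod_{j=1}^{n}P_{i,j}(x_j)$ with each $P_{i,j}$ $1$-bounded, and set
\[ c_j:=\E_{(x_1,\dots,x_k)\sim\mu}\Big[\prod_{i=1}^{k}P_{i,j}(x_i)\Big],\qquad \eta_j:=1-|c_j|\in[0,1]. \]
Since the $P_i$ are products, $\prod_{j=1}^{n}c_j=\E_{\mu^{\otimes n}}[\prod_i P_i(x_i)]$, so $\prod_{j}|c_j|\ge\delta$ (in particular every $c_j\neq 0$). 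Taking logarithms and using $\log(1/t)\ge 1-t$ for $t\in(0,1]$ yields $\sum_{j=1}^{n}\eta_j\le\log(1/\delta)$. This is the only use of the hypothesis, and it is what produces the $\log(1/\delta)$ factor at the end.

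The main work is a clean per-coordinate estimate. First I would apply \cref{lemma:prod1} in coordinate $j$ to the $1$-bounded functions $P_{1,j},\dots,P_{k,j}$ with parameter $\eta_j$ (legitimate since $|c_j|=1-\eta_j$), obtaining $|P_{1,j}(x)-P_{1,j}(y)|\le O_{\alpha}(\sqrt{\eta_j})$ for all $x,y\in\A_1$. Writing $p_j:=\E_{\mu_1}[P_{1,j}]$ and $m_j:=\E_{\mu_1}[|P_{1,j}|^2]\le 1$, Jensen's inequality gives $|P_{1,j}(x)-p_j|^2\le\E_{y\sim\mu_1}|P_{1,j}(x)-P_{1,j}(y)|^2\le O_{\alpha}(\eta_j)$, hence
\[ r_j:=\E_{\mu_1}\big[|P_{1,j}-p_j|^2\big]=m_j-|p_j|^2\le O_{\alpha}(\eta_j). \]
This is essentially the only subtle point, and where I expect the main obstacle to lie: one must use the ``nearly constant'' half of \cref{lemma:prod1} to get the quadratically better bound $r_j=O_{\alpha}(\eta_j)$. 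Arguing only from $|P_{1,j}|\ge 1-O(\eta_j)$ would give $r_j=O(\sqrt{\eta_j})$, and $\sum_j\sqrt{\eta_j}$ can be as large as $\sqrt{n\log(1/\delta)}$, which destroys the dimension-free bound.

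It remains to evaluate the left-hand side. Expanding $|P_1(x)-P_1(y)|^2$ and using that the pairs $(x_j,y_j)$ are independent across $j$ with each $y_j$ having marginal $\mu_1$,
\[ \E_{\substack{x\sim\mu_1^{\otimes n}\\ y\sim_{1-\gamma}x}}\big[|P_1(x)-P_1(y)|^2\big]=2\prod_{j=1}^{n}m_j-2\prod_{j=1}^{n}\big((1-\gamma)m_j+\gamma|p_j|^2\big)=2\Big(\prod_{j}m_j-\prod_{j}(m_j-\gamma r_j)\Big), \]
where I used $\E_{x_j\sim\mu_1,\,y_j\mid x_j}[P_{1,j}(x_j)\overline{P_{1,j}(y_j)}]=(1-\gamma)m_j+\gamma|p_j|^2$, a nonnegative real, so both products are real and nonnegative. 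Since $0\le m_j-\gamma r_j\le m_j\le 1$ for every $j$, the telescoping identity $\prod_j a_j-\prod_j b_j=\sum_j\big(\prod_{l<j}a_l\big)(a_j-b_j)\big(\prod_{l>j}b_l\big)$ bounds the difference of the two products by $\sum_j\big(m_j-(m_j-\gamma r_j)\big)=\gamma\sum_j r_j$. Combining with the previous paragraph,
\[ \E_{\substack{x\sim\mu_1^{\otimes n}\\ y\sim_{1-\gamma}x}}\big[|P_1(x)-P_1(y)|^2\big]\le 2\gamma\sum_{j=1}^{n}r_j\le O_{\alpha}\Big(\gamma\sum_{j=1}^{n}\eta_j\Big)\le O_{\alpha}\big(\gamma\log(1/\delta)\big), \]
which is the claim. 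Beyond the variance estimate $r_j=O_{\alpha}(\eta_j)$, everything here is bookkeeping.
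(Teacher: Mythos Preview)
Your proof is correct and follows essentially the same approach as the paper: factor over coordinates, use \cref{lemma:prod1} per coordinate to get a variance-type bound $O_\alpha(\eta_j)$, expand $|P_1(x)-P_1(y)|^2$ as a difference of products, and sum the $\eta_j$'s to $\log(1/\delta)$. The only cosmetic difference is in the final step: you bound $\prod_j m_j-\prod_j(m_j-\gamma r_j)$ by a telescoping identity, whereas the paper factors out $\prod_j\|P_1^{(j)}\|_2^2$ and uses $1-\prod_j(1-O(\gamma\eta_j))\le\sum_j O(\gamma\eta_j)$; your telescoping is in fact slightly cleaner since it avoids needing a lower bound on $m_j$.
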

\begin{proof}
We think of $\alpha$ as a constant throughout.
Let $P_i(x) = \prod_{j=1}^n P_i^{(j)}(x_j)$ for $1$-bounded functions $P_i^{(j)}: \A_i \to \bbC$. Let $\eta_j$ be such that
\begin{equation} \left|\E_{(x_1,\dots,x_k) \sim \mu}\Big[\prod_{i=1}^k P_i^{(j)}(x_i) \Big] \right| = 1 - \eta_j. \label{eq:jcor} \end{equation}
Thus, $\prod_{j=1}^n (1-\eta_j) = \delta$ and hence $\sum_{j=1}^n \eta_j \le O(\log(1/\delta))$.

The following equality holds for all $x, y \in \A_1^n$:
\begin{align*}
&\Big|\prod_{j=1}^n P_1^{(j)}(x_j) - \prod_{j=1}^n P_1^{(j)}(y_j) \Big|^2
\\ = ~&\prod_{j=1}^n |P_1^{(j)}(x_j)|^2 + \prod_{j=1}^n |P_1^{(j)}(y_j)|^2 - \prod_{j=1}^n P_1^{(j)}(x_j)\bar{P_1^{(j)}(y_j)} - \prod_{j=1}^n \bar{P_1^{(j)}(x_j)}P_1^{(j)}(y_j).
\end{align*}
Note that for
\[ \E_{x \sim \mu_1, y \sim_{1-\gamma} x} P_1^{(j)}(x)\bar{P_1^{(j)}(y)} \] is a real number, and thus equals
\begin{align}
\E_{x \sim \mu_1, y \sim_{1-\gamma} x} P_1^{(j)}(x)\bar{P_1^{(j)}(y)} &= \|P_1^{(j)}\|_2^2 - \frac12 \E_{x \sim \mu_1, y \sim_{1-\gamma} x} |P_1^{(j)}(x) - P_1^{(j)}(y)|^2 \notag\\
&\ge \|P_1^{(j)}\|_2^2 - O(\gamma\eta_j) \notag\\
&\ge \|P_1^{(j)}\|_2^2(1 - O(\gamma\eta_j)), \label{eq:pjbound}
\end{align}
where we used~\cref{lemma:prod1}. Thus
\begin{align*}
\E_{\substack{x \sim \mu_1^{\otimes n} \\ y \sim_{1-\gamma} x}}[|P_1(x) - P_1(y)|^2] &= 2 \prod_{j=1}^n \|P_1^{(j)}\|_2^2 - 2 \prod_{j=1}^n \E_{\substack{x \sim \mu_1 \\ y \sim_{1-\gamma} x}}\Big[P_1^{(j)}(x)\bar{P_1^{(j)}(y)} \Big] \\
&\le 2 \prod_{j=1}^n \|P_1^{(j)}\|_2^2\Big(1 - \prod_{j=1}^n (1 - O(\gamma\eta_j)) \Big)\\ 
&\le O(\gamma\log(1/\delta)),
\end{align*}
where we have used \eqref{eq:pjbound}.
\end{proof}

With~\cref{lemma:prodn} in hand, we now show that 
a function satisfying the conclusion of~\cref{cor:replace} must be noise stable.
\begin{lemma}
\label{lemma:fstab}
Let $k$ be a positive integer and let $\mu$ be a distribution over $\A_1 \times \dots \times \A_k$ 
that has no Abelian embeddings and in which the probability of each atom is at least $\alpha$. 
If $1$-bounded function $f: \A_1^n \to \bbC$ and product functions $P_i: \A_i^n \to \bbC$ for $i = 2, \dots, k$ satisfy
\[ \left|\E_{(x_1,\dots,x_k) \sim \mu^{\otimes n}}\Big[f(x_1) \prod_{i=2}^k P_i(x_i) \Big]\right| \ge \delta \] then for $\gamma \le c\delta^2/\log(1/\delta)$ for sufficiently small $c = c(\alpha)>0$, it holds that $\stab_{1-\gamma}(f) \ge \delta^2/4$.
\end{lemma}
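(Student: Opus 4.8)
The plan is to exploit the product structure of $P_2,\dots,P_k$ together with \cref{lemma:prodn} to show that noising $f$ in one coordinate cannot decrease the correlation by much. First I would reduce to the case where the correlation is a positive real number, so that I may write
\[
\delta \le \E_{(x_1,\dots,x_k)\sim\mu^{\otimes n}}\Big[f(x_1)\prod_{i=2}^k P_i(x_i)\Big].
\]
Let $T_\gamma$ denote the noise operator (with respect to $\mu_1$) on the first coordinate. The key computation is to compare $\langle f, \cdot\rangle$ against $\langle T_\gamma f, \cdot\rangle$ applied to the "product test functional". Concretely, I would estimate the quantity
\[
\left|\E_{(x_1,\dots,x_k)\sim\mu^{\otimes n}}\Big[(f(x_1)-(T_\gamma f)(x_1))\prod_{i=2}^k P_i(x_i)\Big]\right|,
\]
by introducing a fresh $y_1\sim_{1-\gamma}x_1$ and bounding it via Cauchy--Schwarz by $\|f\|_2\cdot\big(\E[\,|\text{(something involving }P_2\cdots P_k\text{ and the }x_1,y_1\text{ coupling)}|^2]\big)^{1/2}$. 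The point is that the inner square unfolds into a correlation of product functions over a distribution with no Abelian embedding, to which \cref{lemma:prodn} applies with parameter $\gamma$, giving a bound of $O_\alpha(\gamma\log(1/\delta))$.

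In more detail, the cleanest route is probably this. Define the $1$-bounded function $\wt f(x_1):=\E[\prod_{i=2}^kP_i(x_i)\mid x_1]$ (the conditional expectation under $\mu^{\otimes n}$), so that $\delta\le\langle f,\wt f\rangle_{\mu_1^{\otimes n}}$ and also $\|\wt f\|_2\ge\delta$. Now
\[
\langle f,\wt f\rangle - \langle T_\gamma f,\wt f\rangle = \langle f,\wt f\rangle - \langle f, T_\gamma\wt f\rangle
= \E_{\substack{x_1\sim\mu_1^{\otimes n}\\ y_1\sim_{1-\gamma}x_1}}\big[f(x_1)(\wt f(x_1)-\wt f(y_1))\big],
\]
using self-adjointness of $T_\gamma$. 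By Cauchy--Schwarz this is at most $\|f\|_2\cdot\big(\E_{x_1,y_1}|\wt f(x_1)-\wt f(y_1)|^2\big)^{1/2}\le \big(\E_{x_1,y_1}|\wt f(x_1)-\wt f(y_1)|^2\big)^{1/2}$. The remaining task is to bound $\E_{x_1,y_1}|\wt f(x_1)-\wt f(y_1)|^2$. Expanding the square and writing $\wt f(x_1)=\E_{y_2,\dots,y_k}[\prod P_i(y_i)\mid y_1=x_1]$, one gets, after a second application of Cauchy--Schwarz to pull the conditional expectation out, an upper bound in terms of $\E\big[|\prod_{i=2}^kP_i(x_i)-\prod_{i=2}^kP_i(x_i')|^2\big]$ where $(x_2,\dots,x_k)$ and $(x_2',\dots,x_k')$ are the tuples attached to $x_1$ and to a $(1-\gamma)$-correlated copy $y_1$; this is exactly the kind of expression \cref{lemma:prodn} controls (alternatively, one applies \cref{lemma:prodn} to each $P_i$, $i\ge2$, using that the relevant marginal distribution on $\Sigma_1\times\Sigma_i$ has no Abelian embedding by \cref{obs:sub}, and noise in the $\Sigma_1$-coordinate propagates). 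Either way the bound is $O_\alpha(\gamma\log(1/\delta))$.

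Combining, $\langle T_\gamma f,\wt f\rangle \ge \delta - O_\alpha(\sqrt{\gamma\log(1/\delta)})$, which is at least $\delta/2$ once $\gamma\le c\delta^2/\log(1/\delta)$ for a small enough $c=c(\alpha)$. Finally, $\delta/2 \le \langle T_\gamma f,\wt f\rangle \le \|T_\gamma f\|_2\|\wt f\|_2 \le \|T_\gamma f\|_2$, and $\|T_\gamma f\|_2^2 = \langle f, T_{\gamma^2} f\rangle \le \langle f, T_\gamma f\rangle = \stab_{1-\gamma}(f)$ by monotonicity of stability in the noise parameter (or one argues $\|T_\gamma f\|_2^2 \le \stab_{\sqrt{\gamma}}$ and adjusts constants). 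This yields $\stab_{1-\gamma}(f)\ge\delta^2/4$, as claimed. The main obstacle is the second step: correctly unfolding $\E_{x_1,y_1}|\wt f(x_1)-\wt f(y_1)|^2$ into a product-function correlation over an Abelian-embedding-free distribution so that \cref{lemma:prodn} is applicable with the noise living in the right coordinate — in particular making sure the noise parameter that \cref{lemma:prodn} sees is $\Theta(\gamma)$ and not something weaker. Everything else is Cauchy--Schwarz and bookkeeping.
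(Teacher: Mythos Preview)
Your outline is essentially the paper's argument, but the step you flag as ``the main obstacle'' dissolves once you notice a fact you overlooked: your $\wt f(x_1)=\E\big[\prod_{i\ge 2}P_i(x_i)\mid x_1\big]$ is itself a \emph{product} function on $\A_1^n$. Indeed, since $\mu^{\otimes n}$ is a tensor product and each $P_i=\prod_j P_i^{(j)}$, the conditional expectation factors coordinatewise:
\[
\wt f(x_1)=\prod_{j=1}^n \E_{(a_2,\dots,a_k)\sim \mu\mid a_1=(x_1)_j}\Big[\prod_{i=2}^k P_i^{(j)}(a_i)\Big].
\]
The paper calls this function $\wt P$ precisely to emphasize this. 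Now you have $k$ honest $1$-bounded product functions $\wt P,P_2,\dots,P_k$, and by your own observation $\|\wt P\|_2\ge\delta$ one gets
\[
\Big|\E_{\mu^{\otimes n}}\big[\wt P(x_1)\textstyle\prod_{i\ge 2}P_i(x_i)\big]\Big|=\|\wt P\|_2^2\ge\delta^2,
\]
so \cref{lemma:prodn} applies \emph{directly} with $P_1:=\wt P$ and yields $\E_{x,y\sim_{1-\gamma}x}|\wt P(x)-\wt P(y)|^2\le O_\alpha(\gamma\log(1/\delta))$. No second Cauchy--Schwarz, no coupling of $(x_2,\dots,x_k)$ with $(x_2',\dots,x_k')$, and no per-$i$ reduction to $\mu_{1i}$ is needed. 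Your alternative route through $\mu_{1i}$ in particular is problematic: to invoke \cref{lemma:prodn} on that two-coordinate marginal you would need a product function in the $\A_1$ slot with large correlation to $P_i$, and $f$ does not qualify.

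The rest of your argument matches the paper: from $|\langle f,\wt P\rangle|\ge\delta$ and the displayed bound you get $|\E_{x,y\sim_{1-\gamma}x}[f(x)\wt P(y)]|\ge\delta/2$, then use symmetry in $(x,y)$ and $|\wt P|\le 1$ to obtain $\E_x|\E_{y\sim_{1-\gamma}x}f(y)|\ge\delta/2$, whence $\stab_{1-\gamma}(f)\ge\|\mathrm{T}_{1-\gamma}f\|_2^2\ge\delta^2/4$. (Minor point: be careful with the noise-operator subscript; in the paper's convention $\mathrm{T}_\rho$ has correlation $\rho$, so $\mathrm{T}_{1-\gamma}^2=\mathrm{T}_{(1-\gamma)^2}$ and $\|\mathrm{T}_{1-\gamma}f\|_2^2=\stab_{(1-\gamma)^2}(f)\le\stab_{1-\gamma}(f)$.)
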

\begin{proof}
Define the product function $\wt{P}: \A_1^n \to \bbC$ as
\[ \wt{P}(x) := \E_{(x_1,\dots,x_k) \sim \mu^{\otimes n} : x_1 = x} \Big[\prod_{i=2}^k P_i(x_i) \Big].  \]
Then the following inequalities hold:
\begin{equation}
\delta \le \left|\E_{(x_1,\dots,x_k) \sim \mu^{\otimes n}}\Big[f(x_1) \prod_{i=2}^k P_i(x_i) \Big]\right| = \left|\E_{x \sim \mu_1^{\otimes n}} \Big[f(x)\wt{P}(x) \Big] \right|, \label{eq:wtp}
\end{equation}
and
\[ \left|\E_{(x_1,\dots,x_k) \sim \mu^{\otimes n}}\Big[\wt{P}(x_1) \prod_{i=2}^k P_i(x_i) \Big]\right| \ge \delta^2. \] The latter follows by Cauchy-Schwarz (see the proof of \cref{lemma:replace}). By \cref{lemma:prodn}, we know that for $\gamma \le c\delta^2/\log(1/\delta)$
\begin{align*} 
\left|\E_{x \sim \mu_1^{\otimes n}, y \sim_{1-\gamma} x}\Big[f(x)(\wt{P}(x) - \wt{P}(y)) \Big] \right| &\le \E_{x \sim \mu_1^{\otimes n}, y \sim_{1-\gamma} x}\Big[|\wt{P}(x) - \wt{P}(y)| \Big] \\
&\le \E_{x \sim \mu_1^{\otimes n}, y \sim_{1-\gamma} x}\Big[|\wt{P}(x) - \wt{P}(y)|^2 \Big]^{1/2} \le O(\sqrt{\gamma\log(1/\delta)}) \le \delta/2.
\end{align*}
Combining this with \eqref{eq:wtp} gives
\begin{align*}
\delta/2 &\le \left|\E_{x \sim \mu_1^{\otimes n}, y \sim_{1-\gamma} x} \Big[f(x)\wt{P}(y) \Big] \right| = \left|\E_{x \sim \mu_1^{\otimes n}, y \sim_{1-\gamma} x} \Big[f(y)\wt{P}(x) \Big] \right| \\
&\le \E_{x \sim \mu_1^{\otimes n}} \left|\E_{y \sim_{1-\gamma} x}[f(y)] \right|,
\end{align*}
where the equality uses symmetry of $x, y$ and the inequality is because $\wt{P}$ is $1$-bounded.
One final application of Cauchy-Schwarz gives that $\stab_{1-\gamma}(f) \ge \delta^2/4$ as desired.
\end{proof}
\cref{thm:kcsp} now readily follows by combining \cref{cor:replace} and \cref{lemma:fstab}.
\begin{proof}[Proof of \cref{thm:kcsp}]
Applying \cref{cor:replace} and \cref{lemma:fstab} gives that for some $\delta$ depending on $\eps$,
\[ \stab_{1-\gamma\delta}(f_1) = \E_{I \sim_{1-\delta} [n], z \sim \mu_1^I}\Big[\stab_{1-\gamma}((f_1)_{I \to z}) \Big] \ge \delta^3/4. \]
for $\gamma = c\delta^2/\log(1/\delta)$. This completes the induction, showing that
\begin{equation}\label{eq:recursion2}
\delta_{k}(\eps)
\geq c\cdot \tau_{k}(\tau_k(\ldots(\tau_k(\eps^2)^2))^2)^4,
\end{equation}
where the number of compositions of $\tau_k$ is 
$k-1$.
\end{proof}

\subsection{The Quantitative Bounds for~\cref{lemma:kcsp,thm:kcsp}}\label{sec:solve_rec}
Let $H(k)$ be the number of exponentials needed in~\cref{thm:kcsp} with $k$.
~\cref{thm:3csp_bkm2} gives us that $H(3)\leq 1$. 
For any $k\geq 3$,~\eqref{eq:recursion1} implies that~\cref{lemma:kcsp} for $k$ holds with $H(k-1)+1$
exponentials, and then~\eqref{eq:recursion1} implies
\cref{thm:kcsp} for $k$ holds with $(k-1)(H(k-1)+1)$
exponentials. We conclude that $H(k)\leq (k-1)(H(k-1)+1)$, and solving gives
\[
H(k)\leq (k-1)+(k-1)(k-2)+(k-1)(k-2)(k-3)+\ldots+(k-1)(k-2)\cdots 3
\leq k^{k}.
\]



{\small
\bibliographystyle{alpha}
\bibliography{refs}}

\end{document}